\newtheorem{theorem}{Theorem}
\newtheorem{corollary}[theorem]{Corollary}
\newtheorem{definition}[theorem]{Definition}
\newtheorem{example}[theorem]{Example}
\newtheorem{proposition}[theorem]{Proposition}
\newtheorem{remark}[theorem]{Remark}
\newenvironment{proof}[1][Proof]{\noindent\textbf{#1.} }{\ \rule{0.5em}{0.5em}}
\begin{document}

\title{Pseudotensor Problem of Gravitational Energy-momentum and Noether's
Theorem Revisited}
\author{Zhaoyan Wu \\
%EndAName
Center for Theoretical Physics, Jilin University, China}
\maketitle

\begin{abstract}
Based on a general variational principle, Noether's theorem is revisited. It
is shown that the long existing pseudotensor problem of gravitational
energy-momentum is only a result of misreading Noether's theorem and
mistaking different geometrical, physical objects as one and the same. As a
matter of fact, all the Noether's conserved quantities in general relativity
are scalars, their conservation currents are vector fields on spacetime
manifold. The difficulty of non-localizability of gravitational
energy-momentum, which is a direct consequence of the pseudotensor property,
does not really exist.
\end{abstract}

\section{Introduction}

Einstein had no sooner founded his theory of general relativity (GR) than he
realized its equation of motion does not give a continuity equation for the
conservation of energy-momentum of matter $T^{\alpha \beta }$. In order to
keep the law of conservation of energy-momentum alive in GR, Einstein
introduced the gravitational energy-momentum $t^{\alpha \beta }$ so that the
total energy-momentum $T^{\alpha \beta }+t^{\alpha \beta }$ is conserved.
Bauer immediately pointed out that this $t^{\alpha \beta }$ is not a tensor,
hence it is not localizable. Besides, it is asymmetrical. Non-symmetrical
stress is not allowed in physics. Decades later, Landau and Lifshitz found a
symmetrical gravitational energy-momentum[1], which is, however, still not a
tensor. Several gravitational energy-momentum complexes have been proposed.
They are all pseudotensors. It is commonly believed that no gravitational
energy-momentum complex can be a tensor.\footnote{%
In Appendix A, a proof of the following proposition is presented: If $%
S^{\alpha \beta }(x)$ is a tensor, and $\partial _{\alpha }[\sqrt{-|g(x)|}%
S^{\alpha \beta }(x)]=0$, for all coordinate systems, then $S^{\alpha \beta
}(x)=-S^{\beta \alpha }(x)$} A direct consequence of the pseudotensor
property of $t^{\alpha \beta }$ is the non-localizability of the
gravitational energy. It is not allowed to talk about the density of the
gravitational energy at a given spacetime point. It is not allowed either to
talk about the amount of gravitational energy on a finite space-like
hyper-surface bounded with a topological 2-sphere. But even so, many
relativists don't think the pseudotensor property and\ non-localizability of
gravitational energy have messed up GR, the most beautiful theory in
physics. They attribute the\ non-localizability of gravitational energy to
the equivalence principle physically, and to the following fact\
mathematically: For any geodesic $G$ in spacetime, one can always choose
coordinates such that the Chritoffel symbols $\Gamma _{\beta \gamma
}^{\alpha }$ at all $p\in G$ vanish. Non-localizability of gravitational
energy is considered inherent in the theory of GR.

In the present paper, based on a general variational principle of classical
fields, Noether's theorem is revisited. It is shown that in GR,
corresponding to the 1-parameter local groups of coordinate transformations
with the same form (for example, $x^{\alpha}\mapsto\widetilde{x}%
^{\alpha}=x^{\alpha }+\epsilon\delta_{0}^{\alpha}$, $y^{\alpha}\mapsto%
\widetilde{y}^{\alpha }=y^{\alpha}+\epsilon\delta_{0}^{\alpha}$, $%
z^{\alpha}\mapsto\widetilde {z}^{\alpha}=z^{\alpha}+\epsilon\delta_{0}^{%
\alpha}$, etc.), the conserved expressions in different coordinate systems
are functions of coordinates, field quantities and their derivatives with
the same form, but they are not the components in different coordinate
systems of the same geometrical, physical object; while corresponding to the
same 1-parameter local group of diffeomorphisms of spacetime $M$ onto
itself, the conserved expressions in different coordinate systems are
functions of coordinates, field quantities and their derivatives with
(generally) different forms, but they are the components of the same
geometrical, physical object. All the Noether's conserved quantities in GR
are scalars (the corresponding Noether's conservation currents are vector
fields on the spacetime manifold); the long existing pseudotensor problem of
gravitational energy-momentum is only a result of misreading Noether's
theorem and mistaking different geometrical, physical objects as one and the
same; and the non-localizability difficulty does not really exist.

In section 2, a general variational principle for classical fields is
presented. In section 3, the Noether's theorem is rederived. Then, in
section 4, these general results are applied to the specific case of general
relativity, especially Noether's theorem is applied to get quite a few
conservation laws in GR. It is noted that the conserved quantities obtained
here are not tensors, like various gravitational energy-momentum complexes.
Then Noether's theorem is revisited in section 5, and there the main
theorems are proved. In section 6, physical significance of the main
theorems are discussed.

These results will be used to explore the energy-momentum conservation and
the gravitational energy-momentum in GR in a later paper.

\section{Variational principle for classical fields}

There have been varied versions of variational principle and Noether's
theorem in the literature[2,3], and different notations have been used by
different authors. For the readers' convenience, and for the consistency of
the reasoning, we start with presenting a general variational principle for
classical fields in $n(\geqslant 2)$-dimensional spacetime with a Lagrangian
containing the spacetime coordinates, field quantities and their derivatives
of up to the $N(\geqslant 1)$-th order. In our opinion, non-local
interaction is not acceptable, so we assume that the Lagrangian does not
contain integrations of the field quantities.

First, we present a useful mathematical formula for the variational
principle, which does not rely on physics. Suppose $\left\{ \Phi_{B}:\mathbb{%
R}^{n}\longrightarrow\mathbb{R}|B=1,2,\ldots,f\right\} $ $(n\geqslant2)$ are
smooth functions, and function $L=L(x,\Phi(x),\partial\Phi(x),\ldots
,\partial^{N}\Phi(x))$ is smooth with respect to all its arguments. It is
easy to show just by using Leibniz's rule that (the Einstein convention is
used for coordinate indices)\footnote{%
Some arguments of $L$ are not independent of each other, such as $\frac{%
\partial^{2}}{\partial x^{1}\partial x^{2}}\Phi _{B}(x)=\frac{\partial^{2}}{%
\partial x^{2}\partial x^{1}}\Phi_{B}(x),$ $g_{\alpha\beta}(x)=g_{\beta%
\alpha}(x)$, etc. In order to avoid the indefiniteness of derivatives
related to the above mentioned facts, and in order to keep the formulaes
neat, it is assumed in the present paper without loss of generality that 
\begin{equation*}
\frac{\partial}{\partial\partial_{1}\partial_{2}\Phi_{B}}L=\frac{\partial }{%
\partial\partial_{2}\partial_{1}\Phi_{B}}L,\text{ \ }\frac{\partial }{%
\partial g_{\alpha\beta}(x)}L=\frac{\partial}{\partial g_{\beta\alpha}(x)}%
L,\ldots
\end{equation*}
See Appendix B for the details.}%
\begin{equation*}
\delta L=\sum_{B=1}^{f}\sum_{X=0}^{N}\overline{\delta}\partial_{\lambda_{1}}%
\cdots\partial_{\lambda_{X}}\Phi_{B}(x)\frac{\partial L}{\partial
\partial_{\lambda_{1}}\cdots\partial_{\lambda_{X}}\Phi_{B}(x)}
\end{equation*}%
\begin{equation*}
=\sum_{B=1}^{f}\overline{\delta}\Phi_{B}(x)\sum_{X=0}^{N}(-1)^{X}\partial_{%
\lambda_{1}}\cdots\partial_{\lambda_{X}}\frac{\partial L}{%
\partial\partial_{\lambda_{1}}\cdots\partial_{\lambda_{X}}\Phi_{B}(x)}%
+\partial_{\lambda}\sum_{B=1}^{f}\sum_{Y=0}^{N-1}
\end{equation*}%
\begin{equation}
\overline{\delta}\partial_{\mu_{1}}\cdots\partial_{\mu_{Y}}\Phi_{B}(x)%
\sum_{Z=0}^{N-1-Y}(-1)^{Z}\partial_{\nu_{1}}\cdots\partial_{\nu_{Z}}\frac{%
\partial L}{\partial\partial_{\lambda}\partial_{\mu_{1}}\cdots
\partial_{\mu_{Y}}\partial_{\nu_{1}}\cdots\partial_{\nu_{Z}}\Phi_{B}(x)} 
\tag{1}
\end{equation}
when $\Phi_{B}(x)\longmapsto\widetilde{\Phi}_{B}(x)=\Phi_{B}(x)+\overline {%
\delta}\Phi_{B}(x),\forall B=1,\ldots,f$. Consider the functional $F$ of the
following form%
\begin{equation}
F[\Phi]=\int_{\Omega}d^{n}xL(x,\Phi(x),\partial\Phi(x),\partial^{2}\Phi(x),%
\ldots,\partial^{N}\Phi(x))\text{,}  \tag{2}
\end{equation}
where $\Omega$ is an open subset with a compact closure of $\mathbb{R}^{n}$.
When the arguments $\Phi_{B}(x)$ $\forall B=1,\ldots,f$ change slightly, the
variation of functional $F$ is 
\begin{equation*}
\delta F[\Phi]=\int_{\Omega}d^{n}x\sum_{B=1}^{f}\sum_{X=0}^{N}\overline {%
\delta}\partial_{\lambda_{1}}\cdots\partial_{\lambda_{X}}\Phi_{B}(x)\frac{%
\partial L}{\partial\partial_{\lambda_{1}}\cdots\partial_{\lambda
_{X}}\Phi_{B}(x)}
\end{equation*}%
\begin{align*}
& =\int_{\Omega}d^{n}x\sum_{B=1}^{f}\overline{\delta}\Phi_{B}(x)\sum
_{X=0}^{N}(-1)^{X}\partial_{\lambda_{1}}\cdots\partial_{\lambda_{X}}\frac{%
\partial L}{\partial\partial_{\lambda_{1}}\cdots\partial_{\lambda_{X}}%
\Phi_{B}(x)} \\
& +\int_{\Omega}d^{n}x\partial_{\lambda}\sum_{B=1}^{f}\sum_{Y=0}^{N-1}%
\overline{\delta}\partial_{\mu_{1}}\cdots\partial_{\mu_{Y}}\Phi_{B}(x) \\
& \sum_{Z=0}^{N-1-Y}(-1)^{Z}\partial_{\nu_{1}}\cdots\partial_{\nu_{Z}}\frac{%
\partial L}{\partial\partial_{\lambda}\partial_{\mu_{1}}\cdots
\partial_{\mu_{Y}}\partial_{\nu_{1}}\cdots\partial_{\nu_{Z}}\Phi_{B}(x)}
\end{align*}%
\begin{align}
& =\int_{\Omega}d^{n}x\sum_{B=1}^{f}\overline{\delta}\Phi_{B}(x)\sum
_{X=0}^{N}(-1)^{X}\partial_{\lambda_{1}}\cdots\partial_{\lambda_{X}}\frac{%
\partial L}{\partial\partial_{\lambda_{1}}\cdots\partial_{\lambda_{X}}%
\Phi_{B}(x)}  \notag \\
& +\int_{\partial\Omega}ds_{\lambda}(x)\sum_{B=1}^{f}\sum_{Y=0}^{N-1}%
\overline{\delta}\partial_{\mu_{1}}\cdots\partial_{\mu_{Y}}\Phi _{B}(x) 
\notag \\
& \sum_{Z=0}^{N-1-Y}(-1)^{Z}\partial_{\nu_{1}}\cdots\partial_{\nu_{Z}}\frac{%
\partial L}{\partial\partial_{\lambda}\partial_{\mu_{1}}\cdots
\partial_{\mu_{Y}}\partial_{\nu_{1}}\cdots\partial_{\nu_{Z}}\Phi_{B}(x)} 
\tag{3}
\end{align}
This can be easily obtained by using eqn.(1) and the Stokes theorem. The
derivatives of functional (2) is defined as follows.

\begin{definition}
If the change of the functional (2) can be expressed as%
\begin{equation}
F[\Phi+\overline{\delta}\Phi]-F[\Phi]=\int_{\Omega}d^{n}x\sum_{B=1}^{f}%
\overline{\delta}\Phi_{B}(x)D^{B}[\Phi,x]+o[\overline{\delta}\Phi]  \tag{4}
\end{equation}
where $D^{B}[\Phi,x]$ is a functional of $\Phi$ varying with $x$, and $o[%
\overline{\delta}\Phi]$ is a higher order infinitesimal of $\overline {\delta%
}\Phi$, when $\left\{ \Phi_{B}\text{ }|\text{ }B=1,2,\ldots,f\right\} $
change slightly while the boundary values of $\Phi,\partial\Phi
,\ldots,\partial^{N-1}\Phi$ are kept fixed, then $F$ is called
differentialble at $\Phi$, and $D^{B}[\Phi,x]$ is called the derivative of
functional $F$ with respect to $\Phi_{B}$ at $\Phi$ and point $x$, and
denoted by
\end{definition}

\begin{equation}
D^{B}[\Phi,x]=\frac{\delta F[\Phi]}{\delta\Phi_{B}(x)}  \tag{5}
\end{equation}

Let us now apply the general formula (3) to the action functional of
classical field $\left\{ \Phi_{B}:M\longrightarrow\mathbb{R}\text{ }|\text{ }%
B=1,2,\ldots,f\right\} $

\begin{equation}
A[\Phi]=\int_{x(\Omega)}d^{n}xL(x,\Phi(x),\partial\Phi(x),\partial^{2}%
\Phi(x),\ldots,\partial^{N}\Phi(x))  \tag{6}
\end{equation}
where $\Omega$ is an open subset with a compact closure of the spacetime
manifold $M$, $x(\Omega)\subset\mathbb{R}^{n}$ is the image of $%
\Omega\subset M$ under the coordinate mapping $x:M\mathbb{\rightarrow R}^{n}$
and $L$ is the Lagrangian of the field. We get the difference between the
action functionals over $\Omega$ of two kinematically allowed movements
close to each other

\begin{equation*}
\delta A[\Phi]=\int_{x(\Omega)}d^{n}x\sum_{B=1}^{f}\overline{\delta}\Phi
_{B}(x)\sum_{X=0}^{N}(-1)^{X}\partial_{\lambda_{1}}\cdots\partial_{\lambda
_{X}}\frac{\partial L}{\partial\partial_{\lambda_{1}}\cdots\partial
_{\lambda_{X}}\Phi_{B}(x)}
\end{equation*}%
\begin{align*}
& +\int_{x(\Omega)}d^{n}x\partial_{\lambda}\sum_{B=1}^{f}\sum_{Y=0}^{N-1}%
\overline{\delta}\partial_{\mu_{1}}\cdots\partial_{\mu_{Y}}\Phi_{B}(x) \\
& \sum_{Z=0}^{N-1-Y}(-1)^{Z}\partial_{\nu_{1}}\cdots\partial_{\nu_{Z}}\frac{%
\partial L}{\partial\partial_{\lambda}\partial_{\mu_{1}}\cdots
\partial_{\mu_{Y}}\partial_{\nu_{1}}\cdots\partial_{\nu_{Z}}\Phi_{B}(x)}
\end{align*}%
\begin{equation*}
=\int_{x(\Omega)}d^{n}x\sum_{B=1}^{f}\overline{\delta}\Phi_{B}(x)\sum
_{X=0}^{N}(-1)^{X}\partial_{\lambda_{1}}\cdots\partial_{\lambda_{X}}\frac{%
\partial L}{\partial\partial_{\lambda_{1}}\cdots\partial_{\lambda_{X}}%
\Phi_{B}(x)}
\end{equation*}%
\begin{align}
& +\int_{x(\partial\Omega)}ds_{\lambda}(x)\sum_{B=1}^{f}\sum_{Y=0}^{N-1}%
\overline{\delta}\partial_{\mu_{1}}\cdots\partial_{\mu_{Y}}\Phi _{B}(x) 
\notag \\
& \sum_{Z=0}^{N-1-Y}(-1)^{Z}\partial_{\nu_{1}}\cdots\partial_{\nu_{Z}}\frac{%
\partial L}{\partial\partial_{\lambda}\partial_{\mu_{1}}\cdots
\partial_{\mu_{Y}}\partial_{\nu_{1}}\cdots\partial_{\nu_{Z}}\Phi_{B}(x)} 
\tag{7}
\end{align}

Equation\ (7) suggests that for all $N\geqslant1$, $n\geqslant2$ \textbf{the
least action principle} read as follows.

\textit{For any spacetime region }$\Omega $ with a compact closure, a\textit{%
mong all kinematically allowed movements in }$\Omega $ \textit{with the same
boundary condition}

\begin{equation}
\delta\Phi|_{\partial\Omega}=0,\delta\partial\Phi|_{\partial\Omega}=0,%
\ldots,\delta\partial^{N-1}\Phi|_{\partial\Omega}=0,  \tag{8}
\end{equation}
\textit{the movement allowed by physical laws takes the stationary value of
the action over }$\Omega$\textit{.}

Combining eqns.(7), (8), one obtains\textbf{\ the equation of motion}

\begin{equation}
\frac{\delta A[\Phi]}{\delta\Phi_{B}(x)}=\sum_{X=0}^{N}(-1)^{X}\partial
_{\lambda_{1}}\cdots\partial_{\lambda_{X}}\frac{\partial L}{\partial
\partial_{\lambda_{1}}\cdots\partial_{\lambda_{X}}\Phi_{B}(x)}=0.  \tag{9}
\end{equation}

\section{Noether's theorem}

\subsection{Re-deriving the theorem}

Performing symmetry analysis, one can adopt the active viewpoint, or the
passive viewpoint. They are equivalent to each other, but the former is more
elegant. What I am going to do, however, is to show how the pseudotensor
problem of gravitational energy-momentum arises, and it is related to the
passive viewpoint. So I will use the coordinate language in the following.

\begin{theorem}
If the action of classical fields over every spacetime region $\Omega$ with
a compact closure%
\begin{equation*}
A[\Phi]=\int_{x(\Omega)}d^{n}xL(x,\Phi(x),\partial\Phi(x),\partial^{2}%
\Phi(x),\ldots,\partial^{N}\Phi(x))
\end{equation*}
remains unchanged under the following $r-$parameter local group of
coordinate transformations 
\begin{align}
x^{\lambda} & \longmapsto\widetilde{x}^{\lambda}=x^{\lambda}+\delta
x^{\lambda},  \notag \\
\delta x^{\lambda} & =:\delta x^{\lambda}(x,\epsilon^{1},\ldots,\epsilon
^{r}),|\epsilon^{i}|\ll1,\delta x^{\lambda}(x,0,\ldots,0)=0\text{ \ \ \ \ } 
\notag \\
\Phi_{B}(x) & \longmapsto\widetilde{\Phi}_{B}(\widetilde{x})=\Phi
_{B}(x)+\delta\Phi_{B}(x),\text{ }  \notag \\
\delta\Phi_{B}(x) & =:\delta\Phi_{B}(x,\epsilon^{1},\ldots,\epsilon ^{r}),%
\text{ }|\epsilon^{i}|\ll1,\text{ }\delta\Phi_{B}(x,0,\ldots,0)=0  \tag{10}
\end{align}
then there exist $r$ conservation laws.
\end{theorem}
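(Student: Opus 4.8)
The plan is to derive Noether's theorem by computing in two different ways the change of the action under the transformation (22) and then setting the difference to zero. First I would distinguish between the \emph{total} variation $\delta\Phi_B(x)=\widetilde{\Phi}_B(\widetilde{x})-\Phi_B(x)$ appearing in (22) and the \emph{form} variation $\bar\delta\Phi_B(x)=\widetilde{\Phi}_B(x)-\Phi_B(x)$, which is what the earlier formulae (1), (3), (7) are written in terms of; to first order these are related by $\bar\delta\Phi_B=\delta\Phi_B-(\partial_\mu\Phi_B)\,\delta x^\mu$, and similarly for each derivative $\partial_{\nu_1}\cdots\partial_{\nu_Z}\Phi_B$ one has $\bar\delta\,\partial_{\nu_1}\cdots\partial_{\nu_Z}\Phi_B=\partial_{\nu_1}\cdots\partial_{\nu_Z}\bar\delta\Phi_B$ since the form variation commutes with $\partial$. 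This bookkeeping is the part most prone to error, so I would set up careful notation for it first.

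Next I would expand $\widetilde{A}[\widetilde\Phi]-A[\Phi]$. The change comes from two sources: the change of the integration domain $x(\Omega)\mapsto \widetilde x(\Omega)$, which by the Jacobian $\det(\partial\widetilde x/\partial x)=1+\partial_\lambda\,\delta x^\lambda$ contributes $\int d^nx\,\partial_\lambda(L\,\delta x^\lambda)$, and the change of the integrand, which is $\bar\delta L$ evaluated via formula (1). Applying (1) with $\bar\delta$ in place of $\delta$ gives the Euler--Lagrange expression times $\bar\delta\Phi_B$ plus a total divergence $\partial_\lambda(\sum_{B,Z}\bar\delta\,\partial_{\nu_1}\cdots\partial_{\nu_Z}\Phi_B\,K^{B\lambda\nu_1\ldots\nu_Z})$. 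Adding the domain term, the hypothesis that $A[\Phi]$ is unchanged over \emph{every} region $\Omega$ forces the integrand itself to vanish:
\begin{equation*}
\sum_{B}\bar\delta\Phi_B\,\frac{\delta A}{\delta\Phi_B(x)}+\partial_\lambda\!\left[L\,\delta x^\lambda+\sum_{B=1}^{f}\sum_{Z=0}^{N-1}\bar\delta\,\partial_{\nu_1}\cdots\partial_{\nu_Z}\Phi_B(x)\,K^{B\lambda\nu_1\ldots\nu_Z}(x)\right]=0.
\end{equation*}
Then on shell ($\delta A/\delta\Phi_B=0$ by (9)) the bracket is a conserved current. Finally I would substitute $\bar\delta\Phi_B=\delta\Phi_B-(\partial_\mu\Phi_B)\delta x^\mu$, write $\delta x^\lambda=\sum_{i=1}^{r}X_i{}^\lambda\,\epsilon^i$ and $\delta\Phi_B=\sum_{i=1}^{r}\Xi_{iB}\,\epsilon^i$ using the smoothness and vanishing-at-$\epsilon=0$ assumptions in (22), and collect the coefficient of each independent $\epsilon^i$. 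This yields $r$ currents $J_i^\lambda$ with $\partial_\lambda J_i^\lambda=0$ on shell, hence $r$ conserved charges $Q_i=\int_{\sigma}ds_\lambda\,J_i^\lambda$ over a spacelike hypersurface — the $r$ conservation laws claimed.

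The term that needs the most care, and that I expect to be the main obstacle, is the derivative part of the current: because the Lagrangian contains derivatives up to order $N$, one must correctly handle the $\bar\delta\,\partial_{\nu_1}\cdots\partial_{\nu_Z}\Phi_B$ for all $Z=0,\ldots,N-1$ and keep the index structure of the $K^{B\lambda\nu_1\ldots\nu_Z}$ coefficients (defined implicitly in (1)) straight when passing from form variations back to total variations. I would also need to note, as the paper's abstract foreshadows, that the resulting current transforms as a genuine object under coordinate changes only once one keeps track of all the pieces — in particular that identifying the bracket with $\delta x^\lambda=0$ gives the canonical energy--momentum $\tau_\sigma^\lambda$ of (20)--(21), which is the bridge to the discussion in sections 4--5. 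The rest is an application of (1), the divergence theorem (Stokes, as already used for (3)), and the arbitrariness of $\Omega$, all of which are routine given the machinery set up in section 2.
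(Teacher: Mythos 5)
Your proposal is correct and follows essentially the same route as the paper: the same decomposition of $\delta\Phi_B$ into the form variation $\overline{\delta}\Phi_B$ (which commutes with $\partial$) plus the transport term $\delta x^\sigma\partial_\sigma\Phi_B$, the same split of $\delta A$ into a domain-change divergence $\partial_\lambda(L\,\delta x^\lambda)$ plus the integrand variation handled by formula (1), the same appeal to the field equation and the arbitrariness of $\Omega$ to extract the continuity equation, and the same differentiation with respect to the $r$ parameters at $\epsilon=0$ to obtain the $r$ conservation laws of eqn.(27). No gaps to report.
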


\begin{proof}
The small change of field quantity $\delta\Phi_{B}(x)$ can be devided into
two parts, the part due to the small change of its function form and the
part due to the small change of its coordinate arguments.

\begin{equation}
\overline{\delta }\Phi _{B}(x)=:\widetilde{\Phi }_{B}(x)-\Phi _{B}(x),\delta
\Phi _{B}(x)=\overline{\delta }\Phi _{B}(x)+\delta x^{\sigma }\partial
_{\sigma }\Phi _{B}(x)  \tag{11}
\end{equation}%
Similarily the small change of derivatives of field quantity $\delta \lbrack
\partial _{\lambda _{1}}\partial _{\lambda _{2}}\cdots \partial _{\lambda
_{X}}\Phi _{B}(x)]$ can be written as%
\begin{align}
\delta \lbrack \partial _{\lambda _{1}}\partial _{\lambda _{2}}\cdots
\partial _{\lambda _{X}}\Phi _{B}(x)]& =\partial _{\lambda _{1}}\partial
_{\lambda _{2}}\cdots \partial _{\lambda _{X}}\overline{\delta }\Phi _{B}(x)
\notag \\
& +\delta x^{\sigma }\partial _{\sigma }\partial _{\lambda _{1}}\partial
_{\lambda _{2}}\cdots \partial _{\lambda _{X}}\Phi _{B}(x)  \tag{12}
\end{align}%
The variation of the action can be\ devided into two parts. One is due to
the small change of the integration domain $x(\Omega )\longmapsto \widetilde{%
x}(\Omega )$ in $\mathbb{R}^{n}$, and the other is due to the small change
of the integrand 
\begin{align*}
& L(x,\Phi (x),\partial \Phi (x),\partial ^{2}\Phi (x),\ldots ,\partial
^{N}\Phi (x)) \\
& \longmapsto L(x,\widetilde{\Phi }(x),\partial \widetilde{\Phi }%
(x),\partial ^{2}\widetilde{\Phi }(x),\ldots ,\partial ^{N}\widetilde{\Phi }%
(x))
\end{align*}%
\begin{align}
\delta A[\Phi ]& =\int_{x(\partial \Omega )}ds_{\lambda }(x)\delta
x^{\lambda }L  \notag \\
& +\int_{x(\Omega )}d^{n}x\sum\limits_{B=1}^{f}\sum\limits_{X=0}^{N}\frac{%
\partial L}{\partial \partial _{\mu _{1}}\cdots \partial _{\mu _{X}}\Phi
_{B}(x)}\partial _{\mu _{1}}\cdots \partial _{\mu _{X}}\overline{\delta }%
\Phi _{B}(x)  \notag
\end{align}%
\begin{align}
& =\int_{x(\Omega )}d^{n}x\sum\limits_{B=1}^{f}\overline{\delta }\Phi
_{B}(x)\sum\limits_{X=0}^{N}(-1)^{X}\partial _{\mu _{1}}\cdots \partial
_{\mu X}\frac{\partial L}{\partial \partial _{\mu _{1}}\cdots \partial _{\mu
_{X}}\Phi _{B}(x)}  \notag \\
& +\int_{x(\Omega )}d^{n}x\partial _{\lambda }[\delta x^{\sigma }\delta
_{\sigma }^{\lambda }L+\sum\limits_{B=1}^{f}\sum\limits_{Y=0}^{N-1}\partial
_{\mu _{1}}\cdots \partial _{\mu _{Y}}\overline{\delta }\Phi _{B}(x)\times 
\notag
\end{align}%
\begin{equation*}
\sum_{Z=0}^{N-1-Y}(-1)^{Z}\partial _{\nu _{1}}\cdots \partial _{\nu _{Z}}%
\frac{\partial L}{\partial \partial _{\lambda }\partial _{\mu _{1}}\cdots
\partial _{\mu _{Y}}\partial _{\nu _{1}}\cdots \partial _{\nu _{Z}}\Phi
_{B}(x)}]=0
\end{equation*}%
Due to the arbitrariness of $\Omega $, we get the following identities,
which hold for all kinematically allowed movements. 
\begin{align}
& \sum\limits_{B=1}^{f}\overline{\delta }\Phi
_{B}(x)\sum\limits_{X=0}^{N}(-1)^{X}\partial _{\mu _{1}}\cdots \partial
_{\mu X}\frac{\partial L}{\partial \partial _{\mu _{1}}\cdots \partial _{\mu
_{X}}\Phi _{B}(x)}  \notag \\
& +\partial _{\lambda }[\delta x^{\sigma }\delta _{\sigma }^{\lambda
}L+\sum\limits_{B=1}^{f}\sum\limits_{Y=0}^{N-1}\partial _{\mu _{1}}\cdots
\partial _{\mu _{Y}}\overline{\delta }\Phi _{B}(x)\times  \notag
\end{align}%
\begin{equation}
\sum_{Z=0}^{N-1-Y}(-1)^{Z}\partial _{\nu _{1}}\cdots \partial _{\nu _{Z}}%
\frac{\partial L}{\partial \partial _{\lambda }\partial _{\mu _{1}}\cdots
\partial _{\mu _{Y}}\partial _{\nu _{1}}\cdots \partial _{\nu _{Z}}\Phi
_{B}(x)}]=0  \tag{13}
\end{equation}%
For movements allowed by physical laws, the first line of eqn.(13) vanishes,
hence we get the following continuity equation. 
\begin{equation*}
\partial _{\lambda }[\delta x^{\sigma }\delta _{\sigma }^{\lambda
}L+\sum\limits_{B=1}^{f}\sum\limits_{Y=0}^{N-1}\partial _{\mu _{1}}\cdots
\partial _{\mu _{Y}}\left( \delta \Phi _{B}(x)-\delta x^{\sigma }\partial
_{\sigma }\Phi _{B}(x)\right)
\end{equation*}%
\begin{equation}
\sum_{Z=0}^{N-1-Y}(-1)^{Z}\partial _{\nu _{1}}\cdots \partial _{\nu _{Z}}%
\frac{\partial L}{\partial \partial _{\lambda }\partial _{\mu _{1}}\cdots
\partial _{\mu _{Y}}\partial _{\nu _{1}}\cdots \partial _{\nu _{Z}}\Phi
_{B}(x)}]=0  \tag{14}
\end{equation}%
or%
\begin{equation*}
\int_{x(\partial \Omega )}ds_{\lambda }(x)\{\delta x^{\sigma }\delta
_{\sigma }^{\lambda }L+\sum\limits_{B=1}^{f}\sum\limits_{Y=0}^{N-1}[\partial
_{\mu _{1}}\cdots \partial _{\mu _{Y}}\left( \delta \Phi _{B}(x)-\delta
x^{\sigma }\partial _{\sigma }\Phi _{B}(x)\right)
\end{equation*}%
\begin{equation}
\sum_{Z=0}^{N-1-Y}(-1)^{Z}\partial _{\nu _{1}}\cdots \partial _{\nu _{Z}}%
\frac{\partial L}{\partial \partial _{\lambda }\partial _{\mu _{1}}\cdots
\partial _{\mu _{Y}}\partial _{\nu _{1}}\cdots \partial _{\nu _{Z}}\Phi
_{B}(x)}]\}=0  \tag{15}
\end{equation}%
Noting that both $\delta x^{\sigma }$ and $\delta \Phi _{B}(x)$ depend on
the parameters $\epsilon ^{\alpha }$, one gets from eqn.(14)$\ $the
following $r$ conservation laws.%
\begin{align}
& \partial _{\lambda }\{\frac{\partial \delta x^{\sigma }}{\partial \epsilon
^{\alpha }}|_{\epsilon =0}\delta _{\sigma }^{\lambda
}L+\sum\limits_{B=1}^{f}\sum\limits_{Y=0}^{N-1}\partial _{\mu _{1}}\cdots
\partial _{\mu _{Y}}\left( \frac{\partial \delta \Phi _{B}(x)}{\partial
\epsilon ^{\alpha }}|_{\epsilon =0}-\frac{\partial \delta x^{\sigma }}{%
\partial \epsilon ^{\alpha }}|_{\epsilon =0}\partial _{\sigma }\Phi
_{B}(x)\right)  \notag \\
& \sum_{Z=0}^{N-1-Y}(-1)^{Z}\partial _{\nu _{1}}\cdots \partial _{\nu _{Z}}%
\frac{\partial L}{\partial \partial _{\lambda }\partial _{\mu _{1}}\cdots
\partial _{\mu _{Y}}\partial _{\nu _{1}}\cdots \partial _{\nu _{Z}}\Phi
_{B}(x)}=0,\forall \alpha =1,\ldots ,r  \tag{16}
\end{align}%
\textit{\ }
\end{proof}

The formalism presented so far is good for any classical field in $n$%
-dimensional spacetime with a Lagrangian containing spacetime coordinates,
field quantities and their derivatives of up to the $N$-th order, no matter
the Lagrangian is Galileo covariant, Lorentz covariant, generally covariant
or without any covariance.

\section{Variational principle approach to general relativity}

Let us apply the general results obtained above to the classic fields in GR.
We will consider the case of $(1,1)$-tensor matter field. The results can be
readily generalized to any $(r,s)-$tensor matter field. For the dynamic
system, $(1,1)$-tensor matter field $u_{\xi}^{\theta}(x)$ plus the metric
field $g_{\alpha\beta}(x)$, the Lagrangian and the action over spacetime
region $\Omega$ are respectively

\begin{align}
& L(g(x),\partial g(x),\partial^{2}g(x),u(x),\partial u(x))  \notag \\
& =\sqrt{-|g(x)|}[\mathcal{L}(g(x),u(x),\nabla u(x))+\frac{1}{16\pi G}%
R]=L_{M}+L_{G},  \tag{17}
\end{align}%
\begin{equation*}
A[g,u]=\int_{x(\Omega)}d^{4}x\sqrt{-|g(x)|}[\mathcal{L}(g(x),u(x),\nabla
u(x))
\end{equation*}%
\begin{equation}
+\frac{1}{16\pi G}R]=A_{M}[g,u]+A_{G}[g]  \tag{18}
\end{equation}
where $\mathcal{L}(g(x),u(x),\nabla u(x))$ is the sum of a few scalars
obtained by contracting $g_{\alpha\beta}(x)$, $u_{\varphi}^{\theta}(x)$ and $%
\nabla_{\lambda}u_{\varphi}^{\theta}(x)$ and multiplying the contractions
with suitable numbers such that $\mathcal{L}(\eta,u(x),\partial u(x))$ is
the Lagrangian in special relativity ($\eta=diag(-1,1,1,1)$).

\subsection{Einstein's field equation}

The Euler-Lagrange equation, Eqn.(9) now reads

\begin{equation}
\frac{\delta A[g,u]}{\delta u_{\varphi}^{\theta}(x)}=\sqrt{-|g(x)|}[\frac{%
\partial\mathcal{L}}{\partial u_{\varphi}^{\theta}(x)}-\nabla_{\lambda }%
\frac{\partial\mathcal{L}}{\partial\nabla_{\lambda}u_{\varphi}^{\theta}(x)}%
]=0  \tag{19}
\end{equation}

\begin{equation}
\frac{\delta A[g,u]}{\delta g_{\alpha\beta}(x)}=\sqrt{-|g(x)|}\frac{1}{16\pi
G}[R^{\alpha\beta}-\frac{1}{2}Rg^{\alpha\beta}(x)-8\pi GT^{\alpha\beta}]=0 
\tag{20}
\end{equation}
where $T^{\alpha\beta}$ is the energy-momentum tensor of matter field, which
is a symmetrical (2,0)-tensor.

\begin{equation}
T^{\alpha\beta}=:\frac{2}{\sqrt{-|g(x)|}}\frac{\delta A_{M}[g,u]}{\delta
g_{\alpha\beta}(x)}=T^{\alpha\beta}(u(x),\partial u(x),g(x),\partial g(x)) 
\tag{21}
\end{equation}

\subsection{Noether's theorem for classical field in GR}

Action (18) is invariant under arbitrary coordinate transformations. The
Noether's conservation law, or the continuity equation (14) , now reads%
\begin{equation*}
\frac{\partial }{\partial x^{\kappa }}\{\sqrt{-|g(x)|}J^{\kappa
}[u(x),\partial u(x),g(x),\partial g(x),\partial ^{2}g(x);
\end{equation*}%
\begin{equation}
\delta x,\overline{\delta }u(x),\overline{\delta }g(x),\partial \overline{%
\delta }g(x)]\}=0  \tag{22}
\end{equation}%
where

\begin{align}
& J^{\kappa }\left[ u(x),\partial u(x),g(x),\partial g(x),\partial
^{2}g(x);\delta x,\overline{\delta }u(x),\overline{\delta }g(x),\partial 
\overline{\delta }g(x)\right] =:J_{x}^{\kappa }  \notag \\
& =\{\mathcal{L}\delta _{\rho }^{\kappa }\delta x^{\rho }+\frac{\partial 
\mathcal{L}}{\partial \nabla _{\kappa }u_{\xi }^{\theta }(x)}\overline{%
\delta }u_{\xi }^{\theta }(x)+\frac{1}{2}[\frac{\partial \mathcal{L}}{%
\partial \nabla _{\kappa }u_{\xi }^{\theta }(x)}g^{\theta \alpha }(x)u_{\xi
}^{\beta }(x)  \notag \\
& +\frac{\partial \mathcal{L}}{\partial \nabla _{\beta }u_{\xi }^{\theta }(x)%
}g^{\theta \alpha }(x)u_{\xi }^{\kappa }(x)-\frac{\partial \mathcal{L}}{%
\partial \nabla _{\alpha }u_{\xi }^{\theta }(x)}g^{\theta \kappa }(x)u_{\xi
}^{\beta }(x)-\frac{\partial \mathcal{L}}{\partial \nabla _{\kappa }u_{\beta
}^{\theta }(x)}g^{\xi \alpha }(x)u_{\xi }^{\theta }(x)  \notag
\end{align}%
\begin{align}
& -\frac{\partial \mathcal{L}}{\partial \nabla _{\beta }u_{\kappa }^{\theta
}(x)}g^{\xi \alpha }(x)u_{\xi }^{\theta }(x)+\frac{\partial \mathcal{L}}{%
\partial \nabla _{\alpha }u_{\beta }^{\theta }(x)}g^{\xi \kappa }(x)u_{\xi
}^{\theta }(x)]\overline{\delta }g_{\alpha \beta }(x)\}+  \notag \\
& \frac{1}{16\pi G}\{R\delta _{\rho }^{\kappa }\delta x^{\rho }+[\frac{%
\partial R}{\partial \partial _{\kappa }g_{\alpha \beta }(x)}-\partial _{\mu
}\frac{\partial R}{\partial \partial _{\kappa }\partial _{\mu }g_{\alpha
\beta }(x)}  \notag \\
& -\Gamma _{\nu \mu }^{\nu }(x)\frac{\partial R}{\partial \partial _{\kappa
}\partial _{\mu }g_{\alpha \beta }(x)}]\overline{\delta }g_{\alpha \beta
}(x)+\frac{\partial R}{\partial \partial _{\kappa }\partial _{\mu }g_{\alpha
\beta }(x)}\partial _{\mu }\overline{\delta }g_{\alpha \beta }(x)\}  \tag{23}
\end{align}%
It is worth noting that the form of function $J^{\kappa }$ is independent of
coordinate systems, and the arguments of function $J^{\kappa }$ are not only 
$u(x)$, $\partial u(x)$, $g(x)$, $\partial g(x)$, $\partial ^{2}g(x)$; but
also $\delta x$, $\overline{\delta }u(x)$, $\overline{\delta }g(x)$, $%
\partial \overline{\delta }g(x)$.

\subsubsection{Conservation law due to \textquotedblleft coordinate
shift\textquotedblright\ invariance}

Action (18) remains unchanged under the following \textquotedblleft
coordinate shifts\textquotedblright.

\begin{equation}
\delta x^{\rho}=\epsilon^{\rho},\text{ }\delta u_{\xi}^{\theta}(x)=0,\text{ }%
\delta g_{\alpha\beta}(x)=0.  \tag{24}
\end{equation}
In this case, eqn.(16) reads

\begin{equation}
\partial_{\lambda}[\sqrt{-|g(x)|}\tau_{\rho}^{\lambda}(x)]=0,  \tag{25}
\end{equation}
where

\begin{equation*}
\tau _{\rho }^{\kappa }(x)=:\tau _{\rho }^{\kappa }(u(x),\partial
u(x),g(x),\partial g(x),\partial ^{2}g(x))
\end{equation*}%
\begin{align}
& =\frac{\partial \mathcal{L}}{\partial \nabla _{\kappa }u_{\xi }^{\theta
}(x)}\partial _{\rho }u_{\xi }^{\theta }(x)-\mathcal{L}\delta _{\rho
}^{\kappa }+\frac{1}{2}[\frac{\partial \mathcal{L}}{\partial \nabla _{\kappa
}u_{\xi }^{\theta }(x)}g^{\theta \alpha }(x)u_{\xi }^{\beta }(x)  \notag \\
& +\frac{\partial \mathcal{L}}{\partial \nabla _{\beta }u_{\xi }^{\theta }(x)%
}g^{\theta \alpha }(x)u_{\xi }^{\kappa }(x)-\frac{\partial \mathcal{L}}{%
\partial \nabla _{\alpha }u_{\xi }^{\theta }(x)}g^{\theta \kappa }(x)u_{\xi
}^{\beta }(x)  \notag \\
& -\frac{\partial \mathcal{L}}{\partial \nabla _{\kappa }u_{\beta }^{\theta
}(x)}g^{\xi \alpha }(x)u_{\xi }^{\theta }(x)-\frac{\partial \mathcal{L}}{%
\partial \nabla _{\beta }u_{\kappa }^{\theta }(x)}g^{\xi \alpha }(x)u_{\xi
}^{\theta }(x)  \notag
\end{align}%
\begin{align}
& +\frac{\partial \mathcal{L}}{\partial \nabla _{\alpha }u_{\beta }^{\theta
}(x)}g^{\xi \kappa }(x)u_{\xi }^{\theta }(x)]\partial _{\rho }g_{\alpha
\beta }(x)+\frac{1}{16\pi G}  \notag \\
& [(\frac{\partial R}{\partial \partial _{\kappa }g_{\alpha \beta }(x)}%
-\partial _{\mu }\frac{\partial R}{\partial \partial _{\kappa }\partial
_{\mu }g_{\alpha \beta }(x)}-\frac{1}{2}g^{\eta \xi }(x)\partial _{\mu
}g_{\xi \eta }(x)  \notag \\
& \frac{\partial R}{\partial \partial _{\kappa }\partial _{\mu }g_{\alpha
\beta }(x)})\partial _{\rho }g_{\alpha \beta }(x)+\frac{\partial R}{\partial
\partial _{\kappa }\partial _{\mu }g_{\alpha \beta }(x)}\partial _{\mu
}\partial _{\rho }g_{\alpha \beta }(x)-R\delta _{\rho }^{\kappa }]  \tag{ 26}
\end{align}%
is usually called canonical energy-momentum. It is worth noting that the
arguments of function $\tau _{\rho }^{\kappa }(u(x),\partial
u(x),g(x),\partial g(x),\partial ^{2}g(x))$ are all field quantities and
their derivatives.

\subsubsection{Conservation law due to \textquotedblleft4-dimensional
rotation\textquotedblright\ invariance}

The action (18) remains unchanged under infinitesimal \textquotedblleft
4-dimensional rotations\textquotedblright\ (Lorentz transformations), which
form a 6-parameter family of infinitesimal symmetry transformations%
\begin{align}
x^{\mu }& \longmapsto \widetilde{x}^{\mu }=L_{\nu }^{\mu }x^{\nu },\text{ }%
L_{\nu }^{\mu }=\delta _{\nu }^{\mu }+\Lambda _{\nu }^{\mu },\text{ }%
|\Lambda _{\nu }^{\mu }|\ll 1,\text{ }\eta _{\mu \lambda }\Lambda _{\nu
}^{\lambda }\equiv \Lambda _{\mu \nu },\text{ }\Lambda _{\mu \nu }=-\Lambda
_{\nu \mu },  \notag \\
\delta x^{\lambda }& =\Lambda _{\mu }^{\lambda }x^{\mu }=\frac{1}{2}(\eta
^{\lambda \rho }x^{\sigma }-\eta ^{\lambda \sigma }x^{\rho })\Lambda _{\rho
\sigma },  \tag{27}
\end{align}%
\begin{equation*}
\delta u_{\xi }^{\theta }(x)=\Lambda _{\varphi }^{\theta }u_{\xi }^{\varphi
}(x)-\Lambda _{\xi }^{\eta }u_{\eta }^{\theta }(x)
\end{equation*}%
\begin{equation}
=\Lambda _{\rho \sigma }\frac{1}{2}[\eta ^{\theta \rho }u_{\xi }^{\sigma
}(x)-\eta ^{\theta \sigma }u_{\xi }^{\rho }(x)-\delta _{\xi }^{\sigma }\eta
^{\eta \rho }u_{\eta }^{\theta }(x)+\delta _{\xi }^{\rho }\eta ^{\eta \sigma
}u_{\eta }^{\theta }(x)],  \tag{28}
\end{equation}%
\begin{equation*}
\delta g_{\alpha \beta }(x)=\Lambda _{\rho \sigma }\frac{1}{2}[-\delta
_{\alpha }^{\sigma }\eta ^{\mu \rho }g_{\mu \beta }(x)
\end{equation*}%
\begin{equation}
+\delta _{\alpha }^{\rho }\eta ^{\mu \sigma }g_{\mu \beta }(x)-\delta
_{\beta }^{\sigma }\eta ^{\nu \rho }g_{\alpha \nu }(x)+\delta _{\beta
}^{\rho }\eta ^{\nu \sigma }g_{\alpha \nu }(x)]  \tag{29}
\end{equation}%
In this case eqn.(16) reads%
\begin{equation}
\frac{\partial }{\partial x^{\kappa }}[\sqrt{-|g(x)|}M^{\kappa \rho \sigma
}(x)]=0  \tag{30}
\end{equation}%
where%
\begin{align}
& 2M^{\kappa \rho \sigma }(u(x),\partial u(x),g(x),\partial g(x),\partial
^{2}g(x))=:2M^{\kappa \rho \sigma }(x)  \notag \\
& =\mathcal{L}\delta _{\lambda }^{\kappa }(\eta ^{\lambda \rho }x^{\sigma
}-\eta ^{\lambda \sigma }x^{\rho })+\frac{\partial \mathcal{L}}{\partial
\nabla _{\kappa }u_{\xi }^{\theta }(x)}\{[\eta ^{\theta \rho }u_{\xi
}^{\sigma }(x)-\eta ^{\theta \sigma }u_{\xi }^{\rho }(x)  \notag \\
& -\delta _{\xi }^{\sigma }\eta ^{\eta \rho }u_{\eta }^{\theta }(x)+\delta
_{\xi }^{\rho }\eta ^{\eta \sigma }u_{\eta }^{\theta }(x)]-(\eta ^{\lambda
\rho }x^{\sigma }-\eta ^{\lambda \sigma }x^{\rho })\partial _{\lambda
}u_{\xi }^{\theta }(x)\}  \notag \\
& +\frac{1}{2}[\frac{\partial \mathcal{L}}{\partial \nabla _{\kappa }u_{\xi
}^{\theta }(x)}g^{\theta \alpha }(x)u_{\xi }^{\beta }(x)+\frac{\partial 
\mathcal{L}}{\partial \nabla _{\beta }u_{\xi }^{\theta }(x)}g^{\theta \alpha
}(x)u_{\xi }^{\kappa }(x)  \notag
\end{align}%
\begin{align}
& -\frac{\partial \mathcal{L}}{\partial \nabla _{\alpha }u_{\xi }^{\theta
}(x)}g^{\theta \kappa }(x)u_{\xi }^{\beta }(x)-\frac{\partial \mathcal{L}}{%
\partial \nabla _{\kappa }u_{\beta }^{\theta }(x)}g^{\xi \alpha }(x)u_{\xi
}^{\theta }(x)  \notag \\
& -\frac{\partial \mathcal{L}}{\partial \nabla _{\beta }u_{\kappa }^{\theta
}(x)}g^{\xi \alpha }(x)u_{\xi }^{\theta }(x)+\frac{\partial \mathcal{L}}{%
\partial \nabla _{\alpha }u_{\beta }^{\theta }(x)}g^{\xi \kappa }(x)u_{\xi
}^{\theta }(x)]\times  \notag \\
& \{[-\delta _{\alpha }^{\sigma }\eta ^{\mu \rho }g_{\mu \beta }(x)+\delta
_{\alpha }^{\rho }\eta ^{\mu \sigma }g_{\mu \beta }(x)-\delta _{\beta
}^{\sigma }\eta ^{\nu \rho }g_{\alpha \nu }(x)+\delta _{\beta }^{\rho }\eta
^{\nu \sigma }g_{\alpha \nu }(x)]  \notag
\end{align}%
\begin{align}
& -(\eta ^{\lambda \rho }x^{\sigma }-\eta ^{\lambda \sigma }x^{\rho
})\partial _{\lambda }g_{\alpha \beta }(x)\}+\frac{1}{16\pi G}\{R\delta
_{\lambda }^{\kappa }(\eta ^{\lambda \rho }x^{\sigma }-\eta ^{\lambda \sigma
}x^{\rho })  \notag \\
& +(\frac{\partial R}{\partial \partial _{\kappa }g_{\alpha \beta }(x)}%
-\partial _{\mu }\frac{\partial R}{\partial \partial _{\kappa }\partial
_{\mu }g_{\alpha \beta }(x)}-\Gamma _{\nu \mu }^{\nu }(x)\frac{\partial R}{%
\partial \partial _{\kappa }\partial _{\mu }g_{\alpha \beta }(x)})  \notag \\
& ([-\delta _{\alpha }^{\sigma }\eta ^{\mu \rho }g_{\mu \beta }(x)+\delta
_{\alpha }^{\rho }\eta ^{\mu \sigma }g_{\mu \beta }(x)-\delta _{\beta
}^{\sigma }\eta ^{\nu \rho }g_{\alpha \nu }(x)+\delta _{\beta }^{\rho }\eta
^{\nu \sigma }g_{\alpha \nu }(x)]  \notag \\
& -(\eta ^{\lambda \rho }x^{\sigma }-\eta ^{\lambda \sigma }x^{\rho
})\partial _{\lambda }g_{\alpha \beta }(x))+\frac{\partial R}{\partial
\partial _{\kappa }\partial _{\mu }g_{\alpha \beta }(x)}\partial _{\mu
}([-\delta _{\alpha }^{\sigma }\eta ^{\mu \rho }g_{\mu \beta }(x)  \notag \\
& +\delta _{\alpha }^{\rho }\eta ^{\mu \sigma }g_{\mu \beta }(x)-\delta
_{\beta }^{\sigma }\eta ^{\nu \rho }g_{\alpha \nu }(x)+\delta _{\beta
}^{\rho }\eta ^{\nu \sigma }g_{\alpha \nu }(x)]  \notag \\
& -(\eta ^{\lambda \rho }x^{\sigma }-\eta ^{\lambda \sigma }x^{\rho
})\partial _{\lambda }g_{\alpha \beta }(x))\}  \tag{31}
\end{align}

\subsubsection{Conservation law due to \textquotedblleft4-dimensional pure
deformation\textquotedblright\ invariance}

The action (18) remains unchanged under infinitesimal \textquotedblleft
4-dimensional pure deformations\textquotedblright , which form a 6-parameter
family of infinitesimal symmetry transformations%
\begin{align}
x^{\mu }& \longmapsto \widetilde{x}^{\mu }=L_{\nu }^{\mu }x^{\nu },\text{ }%
L_{\nu }^{\mu }=\delta _{\nu }^{\mu }+\Lambda _{\nu }^{\mu },\text{ }%
|\Lambda _{\nu }^{\mu }|\ll 1,  \notag \\
\text{ }\eta _{\mu \lambda }\Lambda _{\nu }^{\lambda }& =:\Lambda _{\mu \nu
},\text{ }\Lambda _{\mu \nu }=\Lambda _{\nu \mu },  \tag{32}
\end{align}%
\begin{equation}
\delta x^{\lambda }=\Lambda _{\mu }^{\lambda }x^{\mu }=\frac{1}{2}(\eta
^{\lambda \rho }x^{\sigma }+\eta ^{\lambda \sigma }x^{\rho })\Lambda _{\rho
\sigma },  \tag{33}
\end{equation}%
\begin{equation*}
\delta u_{\xi }^{\theta }(x)=\Lambda _{\varphi }^{\theta }u_{\xi }^{\varphi
}(x)-\Lambda _{\xi }^{\eta }u_{\eta }^{\theta }(x)
\end{equation*}%
\begin{equation}
=\frac{1}{2}[\eta ^{\theta \rho }u_{\xi }^{\sigma }(x)+\eta ^{\theta \sigma
}u_{\xi }^{\rho }(x)-\delta _{\xi }^{\sigma }\eta ^{\eta \rho }u_{\eta
}^{\theta }(x)-\delta _{\xi }^{\rho }\eta ^{\eta \sigma }u_{\eta }^{\theta
}(x)]\Lambda _{\rho \sigma },  \tag{34}
\end{equation}%
\begin{equation*}
\delta g_{\alpha \beta }(x)=-\frac{1}{2}[\delta _{\alpha }^{\sigma }\eta
^{\mu \rho }g_{\mu \beta }(x)
\end{equation*}%
\begin{equation}
+\delta _{\alpha }^{\rho }\eta ^{\mu \sigma }g_{\mu \beta }(x)+\delta
_{\beta }^{\sigma }\eta ^{\nu \rho }g_{\alpha \nu }(x)+\delta _{\beta
}^{\rho }\eta ^{\nu \sigma }g_{\alpha \nu }(x)]\Lambda _{\rho \sigma } 
\tag{35}
\end{equation}%
In this case eqn.(16) reads%
\begin{equation}
\frac{\partial }{\partial x^{\kappa }}\{\sqrt{-|g(x)|}N^{\kappa \rho \sigma
}\}=0  \tag{36}
\end{equation}%
where%
\begin{align*}
& 2N^{\kappa \rho \sigma }(u(x),\partial u(x),g(x),\partial g(x),\partial
^{2}g(x))=:2N^{\kappa \rho \sigma }(x) \\
& =\mathcal{L}\delta _{\lambda }^{\kappa }(\eta ^{\lambda \rho }x^{\sigma
}+\eta ^{\lambda \sigma }x^{\rho })+\frac{\partial \mathcal{L}}{\partial
\nabla _{\kappa }u_{\xi }^{\theta }(x)}\{[\eta ^{\theta \rho }u_{\xi
}^{\sigma }(x)+\eta ^{\theta \sigma }u_{\xi }^{\rho }(x) \\
& -\delta _{\xi }^{\sigma }\eta ^{\eta \rho }u_{\eta }^{\theta }(x)-\delta
_{\xi }^{\rho }\eta ^{\eta \sigma }u_{\eta }^{\theta }(x)]-(\eta ^{\lambda
\rho }x^{\sigma }+\eta ^{\lambda \sigma }x^{\rho })\partial _{\lambda
}u_{\xi }^{\theta }(x)\}
\end{align*}%
\begin{align*}
& +\frac{1}{2}[\frac{\partial \mathcal{L}}{\partial \nabla _{\kappa }u_{\xi
}^{\theta }(x)}g^{\theta \alpha }(x)u_{\xi }^{\beta }(x)+\frac{\partial 
\mathcal{L}}{\partial \nabla _{\beta }u_{\xi }^{\theta }(x)}g^{\theta \alpha
}(x)u_{\xi }^{\kappa }(x) \\
& -\frac{\partial \mathcal{L}}{\partial \nabla _{\alpha }u_{\xi }^{\theta
}(x)}g^{\theta \kappa }(x)u_{\xi }^{\beta }(x)-\frac{\partial \mathcal{L}}{%
\partial \nabla _{\kappa }u_{\beta }^{\theta }(x)}g^{\xi \alpha }(x)u_{\xi
}^{\theta }(x) \\
& -\frac{\partial \mathcal{L}}{\partial \nabla _{\beta }u_{\kappa }^{\theta
}(x)}g^{\xi \alpha }(x)u_{\xi }^{\theta }(x)-\frac{\partial \mathcal{L}}{%
\partial \nabla _{\alpha }u_{\beta }^{\theta }(x)}g^{\xi \kappa }(x)u_{\xi
}^{\theta }(x)]\times \\
& \{[\delta _{\alpha }^{\sigma }\eta ^{\mu \rho }g_{\mu \beta }(x)+\delta
_{\alpha }^{\rho }\eta ^{\mu \sigma }g_{\mu \beta }(x)+\delta _{\beta
}^{\sigma }\eta ^{\nu \rho }g_{\alpha \nu }(x)+\delta _{\beta }^{\rho }\eta
^{\nu \sigma }g_{\alpha \nu }(x)]
\end{align*}%
\begin{align}
& +(\eta ^{\lambda \rho }x^{\sigma }+\eta ^{\lambda \sigma }x^{\rho
})\partial _{\lambda }g_{\alpha \beta }(x)\}+\frac{1}{16\pi G}\{R\delta
_{\lambda }^{\kappa }(\eta ^{\lambda \rho }x^{\sigma }-\eta ^{\lambda \sigma
}x^{\rho })  \notag \\
& -(\frac{\partial R}{\partial \partial _{\kappa }g_{\alpha \beta }(x)}%
-\partial _{\mu }\frac{\partial R}{\partial \partial _{\kappa }\partial
_{\mu }g_{\alpha \beta }(x)}-\Gamma _{\nu \mu }^{\nu }(x)\frac{\partial R}{%
\partial \partial _{\kappa }\partial _{\mu }g_{\alpha \beta }(x)})  \notag \\
& [\delta _{\alpha }^{\sigma }\eta ^{\mu \rho }g_{\mu \beta }(x)+\delta
_{\alpha }^{\rho }\eta ^{\mu \sigma }g_{\mu \beta }(x)+\delta _{\beta
}^{\sigma }\eta ^{\nu \rho }g_{\alpha \nu }(x)  \notag \\
& +\delta _{\beta }^{\rho }\eta ^{\nu \sigma }g_{\alpha \nu }(x)+(\eta
^{\lambda \rho }x^{\sigma }+\eta ^{\lambda \sigma }x^{\rho })\partial
_{\lambda }g_{\alpha \beta }(x)]  \notag \\
& -\frac{\partial R}{\partial \partial _{\kappa }\partial _{\mu }g_{\alpha
\beta }(x)}\partial _{\mu }[\delta _{\alpha }^{\sigma }\eta ^{\mu \rho
}g_{\mu \beta }(x)+\delta _{\alpha }^{\rho }\eta ^{\mu \sigma }g_{\mu \beta
}(x)+\delta _{\beta }^{\sigma }\eta ^{\nu \rho }g_{\alpha \nu }(x)  \notag \\
& +\delta _{\beta }^{\rho }\eta ^{\nu \sigma }g_{\alpha \nu }(x)+(\eta
^{\lambda \rho }x^{\sigma }+\eta ^{\lambda \sigma }x^{\rho })\partial
_{\lambda }g_{\alpha \beta }(x)]\}  \tag{37}
\end{align}

\subsubsection{Conservation law due to \textquotedblleft
scaling\textquotedblright\ invariance}

The action (18) remains unchanged under infinitesimal scaling
transformations, which form a 1-parameter family of infinitesimal symmetry
transformations

\begin{equation}
x^{\lambda}\longmapsto\widetilde{x}^{\lambda}=e^{\epsilon}x^{\lambda},\text{
\ }|\epsilon|\ll1,\text{ \ }\delta x^{\lambda}=\epsilon x^{\lambda},\text{ \
\ \ }\forall\lambda=0,1,2,3  \tag{38}
\end{equation}%
\begin{equation}
\delta u_{\xi}^{\theta}(x)=0,\text{ \ }\delta g_{\alpha\beta}(x)=-2\epsilon
g_{\alpha\beta}(x)  \tag{39}
\end{equation}
In this case eqn.(16) reads%
\begin{equation}
\frac{\partial}{\partial x^{\kappa}}\{\sqrt{-|g(x)|}S^{\kappa}\}=0  \tag{40}
\end{equation}%
\begin{align}
& S^{\kappa}(u(x),\partial u(x),g(x),\partial
g(x),\partial^{2}g(x))=:S^{\kappa}(x)  \notag \\
& =\mathcal{L}\delta_{\rho}^{\kappa}x^{\rho}-\frac{\partial\mathcal{L}}{%
\partial\nabla_{\kappa}u_{\xi}^{\theta}(x)}x^{\rho}\frac{\partial}{\partial
x^{\rho}}u_{\xi}^{\theta}(x)-\frac{1}{2}[\frac{\partial\mathcal{L}}{%
\partial\nabla_{\kappa}u_{\xi}^{\theta}(x)}g^{\theta\alpha}(x)u_{\xi}^{\beta
}(x)  \notag \\
& +\frac{\partial\mathcal{L}}{\partial\nabla_{\beta}u_{\xi}^{\theta}(x)}%
g^{\theta\alpha}(x)u_{\xi}^{\kappa}(x)-\frac{\partial\mathcal{L}}{%
\partial\nabla_{\alpha}u_{\xi}^{\theta}(x)}g^{\theta\kappa}(x)u_{\xi}^{\beta
}(x)-\frac{\partial\mathcal{L}}{\partial\nabla_{\kappa}u_{\beta}^{\theta}(x)}%
g^{\xi\alpha}(x)u_{\xi}^{\theta}(x)  \notag \\
& -\frac{\partial\mathcal{L}}{\partial\nabla_{\beta}u_{\kappa}^{\theta}(x)}%
g^{\xi\alpha}(x)u_{\xi}^{\theta}(x)+\frac{\partial\mathcal{L}}{%
\partial\nabla_{\alpha}u_{\beta}^{\theta}(x)}g^{\xi\kappa}(x)u_{\xi}^{\theta
}(x)][2g_{\alpha\beta}(x)+x^{\rho}\partial_{\rho}g_{\alpha\beta}(x)]  \notag
\\
& +\frac{1}{16\pi G}\{R\delta_{\rho}^{\kappa}x^{\rho}-[\frac{\partial R}{%
\partial\partial_{\kappa}g_{\alpha\beta}(x)}-\partial_{\mu}\frac{\partial R}{%
\partial\partial_{\kappa}\partial_{\mu}g_{\alpha\beta}(x)}-\Gamma_{\nu\mu
}^{\nu}(x)\frac{\partial R}{\partial\partial_{\kappa}\partial_{\mu}g_{\alpha%
\beta}(x)}]  \notag \\
& [2g_{\alpha\beta}(x)+x^{\rho}\partial_{\rho}g_{\alpha\beta}(x)]-\frac {%
\partial R}{\partial\partial_{\kappa}\partial_{\mu}g_{\alpha\beta}(x)}%
\partial_{\mu}[2g_{\alpha\beta}(x)+x^{\rho}\partial_{\rho}g_{\alpha\beta
}(x)]\}  \tag{41}
\end{align}

\subsubsection{Conservation law due to \textquotedblleft
skew-scaling\textquotedblright\ invariance}

The action (18) remains unchanged under infinitesimal \textquotedblleft
skew-scaling\textquotedblright\ transformations, which form a 1-parameter
family of infinitesimal symmetry transformations%
\begin{equation}
\delta x^{0}=-\epsilon ^{1}x^{0},\text{ \ }\delta x^{1}=\epsilon ^{1}x^{1},%
\text{ \ }\delta x^{2}=0,\text{ }\delta x^{3}=0\text{\ }  \tag{42}
\end{equation}%
\begin{equation}
\left[ \frac{\partial \widetilde{x}}{\partial x}\right] =\left[ 
\begin{array}{cccc}
1-\epsilon ^{1} & 0 & 0 & 0 \\ 
0 & 1+\epsilon ^{1} & 0 & 0 \\ 
0 & 0 & 1 & 0 \\ 
0 & 0 & 0 & 1%
\end{array}%
\right]  \tag{43}
\end{equation}%
\begin{equation*}
\delta u_{0}^{0}(x)=0,\text{\ }\delta u_{1}^{0}(x)=-2\epsilon
^{1}u_{1}^{0}(x),\text{ }\delta u_{2}^{0}(x)=-\epsilon ^{1}u_{2}^{0}(x),%
\text{ }\delta u_{3}^{0}(x)=-\epsilon ^{1}u_{3}^{0}(x)
\end{equation*}%
\begin{equation*}
\delta u_{0}^{1}(x)=2\epsilon ^{1}u_{0}^{1}(x),\text{\ }\delta
u_{1}^{1}(x)=0,\text{ }\delta u_{2}^{1}(x)=\epsilon ^{1}u_{2}^{1}(x),\text{ }%
\delta u_{3}^{1}(x)=\epsilon ^{1}u_{3}^{1}(x)
\end{equation*}%
\begin{equation*}
\delta u_{0}^{j}(x)=\epsilon ^{1}u_{0}^{j}(x),\text{\ }\delta
u_{1}^{j}(x)=-\epsilon ^{1}u_{1}^{j}(x),\text{ }\delta u_{k}^{j}(x)=0,\text{ 
}\forall j,k=2,3
\end{equation*}%
\begin{equation*}
\delta g_{00}(x)=2\epsilon ^{1}g_{00}(x),\text{ }\delta g_{01}(x)=0,\text{ }%
\delta g_{02}(x)=\epsilon ^{1}g_{02}(x),\text{ }\delta g_{03}(x)=\epsilon
^{1}g_{03}(x)
\end{equation*}%
\begin{equation*}
\delta g_{10}(x)=0,\text{ }\delta g_{11}(x)=-2\epsilon ^{1}g_{11}(x),\text{ }%
\delta g_{12}(x)=-\epsilon ^{1}g_{12}(x),\text{ }\delta g_{13}(x)=-\epsilon
^{1}g_{13}(x)
\end{equation*}%
\begin{equation}
\delta g_{j0}(x)=\epsilon ^{1}g_{j0}(x),\text{ }\delta g_{j1}(x)=-\epsilon
^{1}g_{j1}(x),\text{ }\delta g_{jk}(x)=0,\text{ }\forall j,k=2,3  \tag{44}
\end{equation}%
Substitute eqns.(42) and (44) into eqn.(16), \ we can get a conserved
current 
\begin{equation}
\frac{\partial }{\partial x^{\kappa }}\{\sqrt{-|g(x)|}J_{1}^{\kappa }\}=0 
\tag{45}
\end{equation}%
Here we skip the expression of $J_{1}^{\kappa }(u(x),\partial
u(x),g(x),\partial g(x),\partial ^{2}g(x))$. Similarily we can get conserved
currents $J_{2}^{\kappa }$ and $J_{3}^{\kappa }$.

Because the symmetry group of classical field in GR is an infinite
dimensional Lie group, we can get infinitely many conservation laws by using
Noether's theorem. It is interesting to note that $\tau _{\rho }^{\kappa }$
is not a (1,1) tensor, $M^{\kappa \rho \sigma }$is not a (3,0) tensor, $%
N^{\kappa \rho \sigma }$ is not a (3,0) tensor, and $S^{\kappa }$ is not a
(1,0) tensor, etc. in the following context. 
\begin{equation}
\tau _{\rho }^{\kappa }(x)\neq \frac{\partial x^{\kappa }}{\partial
y^{\lambda }}\frac{\partial y^{\sigma }}{\partial x^{\rho }}\tau _{\sigma
}^{\lambda }(y),M^{\kappa \rho \sigma }(x)\neq \frac{\partial x^{\kappa }}{%
\partial y^{\alpha }}\frac{\partial x^{\rho }}{\partial y^{\beta }}\frac{%
\partial x^{\sigma }}{\partial y^{\gamma }}M^{\alpha \beta \gamma }(y),\text{
etc.}  \tag{46}
\end{equation}%
Therefore the pseudotensor problems are not confined to gravitational
energy-momentum, they are very common in GR. The main tasks of the present
work is\ to show that these pseudotensor problems are just results from
misreading Noether's theorem.

\section{Noether's theorem revisited}

Let us start with the simplest example. Consider the following two
1-parameter local groups of coordinate transformations 
\begin{equation}
\delta x^{\rho }=\epsilon \delta _{\sigma }^{\rho },\text{ }\delta u_{\xi
}^{\theta }(x)=0,\text{ }\delta g_{\alpha \beta }(x)=0  \tag{47}
\end{equation}%
and%
\begin{equation}
\delta y^{\rho }=\epsilon \delta _{\sigma }^{\rho }\ ,\text{ }\delta u_{\xi
}^{\theta }(y)=0,\text{ }\delta g_{\alpha \beta }(y)=0.  \tag{47'}
\end{equation}%
where $\sigma $ is a fixed index. According to eqn.(26), the conservation
currents are respectively%
\begin{equation}
\tau _{\sigma }^{\lambda }(x)=\tau _{\sigma }^{\lambda }(u(x),\partial
u(x),g(x),\partial g(x),\partial ^{2}g(x))  \tag{48}
\end{equation}%
and%
\begin{equation}
\tau _{\sigma }^{\lambda }(y)=\tau _{\sigma }^{\lambda }(u(y),\partial
u(y),g(y),\partial g(y),\partial ^{2}g(y))  \tag{48'}
\end{equation}%
Since they are functions of coordinates, field quantitities and their
derivatives with the same form, they are taken as components of the same
geometrical, physical object, the canonical energy-momentum, which is
considered a pseudotensor, because of inequality (46).

The 1-parameter local groups of coordinate transformations (47) and (47') do
not correspond to the same 1-parameter group of diffeomorphisms of spacetime 
$M$ onto itself. Conversely, if a 1-parameter local group of diffeomorphisms
of spacetime $M$ onto itself is described in coordinate system $%
(x^{0},x^{1},x^{2},x^{3})$ by eqn.(47), then it would be described in
coordinate system $(y^{0},y^{1},y^{2},y^{3})$ by%
\begin{align}
\delta y^{\rho }& =\frac{\partial y^{\rho }}{\partial x^{\sigma }}\epsilon ,%
\text{ }\delta u_{\xi }^{\theta }(y)=\epsilon \lbrack \frac{\partial }{%
\partial y^{\varphi }}(\frac{\partial y^{\theta }}{\partial x^{\sigma }}%
)u_{\xi }^{\varphi }(x)-\frac{\partial }{\partial y^{\xi }}(\frac{\partial
y^{\eta }}{\partial x^{\sigma }})u_{\eta }^{\theta }(x)]\text{ }  \notag \\
\delta g_{\alpha \beta }(y)& =-\epsilon \lbrack \frac{\partial }{\partial
y^{\beta }}(\frac{\partial y^{\sigma }}{\partial x^{\sigma }})g_{\alpha
\sigma }(x)+\frac{\partial }{\partial y^{\alpha }}(\frac{\partial y^{\rho }}{%
\partial x^{\sigma }})g_{\rho \beta }(x)]\text{ } 
\tag{47\textquotedblright
}
\end{align}%
which generally is no longer \textquotedblleft coordinate
shift\textquotedblright\ (47'). Hence the conservation current in coordinate
system $(y^{0},y^{1},y^{2},y^{3})$ is no longer (47').

In general, corresponding to the 1-parameter local groups of coordinate
transformations with the same form, $\widetilde{x}^{\alpha }=f^{\alpha
}(x^{0},x^{1},x^{2},x^{3},\epsilon ),\forall \alpha =0,1,2,3$, and $%
\widetilde{y}^{\alpha }=f^{\alpha }(y^{0},y^{1},y^{2},y^{3},\epsilon
),\forall \alpha =0,1,2,3$, the expressions of $\delta x$, $\overline{\delta 
}u(x)$, $\overline{\delta }g(x)$, $\partial \overline{\delta }g(x)$ and $%
\delta y$, $\overline{\delta }u(y)$, $\overline{\delta }g(y)$, $\partial 
\overline{\delta }g(y)$ are the same, hence Noether's conservation currents
(23) $J_{x}^{\kappa }$ and its counterpart in coordinate system $%
(y^{0},y^{1},y^{2},y^{3})$, $J^{\kappa }\left[ u(y)\text{, }\partial u(y)%
\text{, }g(y)\text{, }\partial g(y)\text{, }\partial ^{2}g(y)\text{; }\delta
y\text{, }\overline{\delta }u(y)\text{, }\overline{\delta }g(y)\text{, }%
\partial \overline{\delta }g(y)\right] =:J_{y}^{\kappa }$ are functions of $%
g $, $u$, their derivatives and cooedinates, with the same form. However $%
\widetilde{x}^{\alpha }=f^{\alpha }(x^{0},x^{1},x^{2},x^{3},\epsilon )$, and 
$\widetilde{y}^{\alpha }=f^{\alpha }(y^{0},y^{1},y^{2},y^{3},\epsilon )$
don't correspond to the same 1-parameter local group of diffeomorphisms of
spacetime $M$ onto itself. While corresponding to the same 1-parameter local
group of diffeomorphisms of spacetime $M$ onto itself, the expressions of $%
\delta x,\overline{\delta }u(x),\overline{\delta }g(x),\partial \overline{%
\delta }g(x)$ and $\delta y,\overline{\delta }u(y),\overline{\delta }%
g(y),\partial \overline{\delta }g(y)$ are different, hence $J_{x}^{\kappa }$
and $J_{y}^{\kappa }$ are not functions of $g$, $u$, their derivatives and
cooedinates with the same form. But even so, I am going to show that in the
latter case $J_{x}^{\kappa }$ and $J_{y}^{\kappa }$ are components of the
same geometrical, physical object, while in the former case they are not.

\begin{theorem}
If $J_{x}^{\kappa }$ and $J_{y}^{\kappa }$ correspond to the same
1-parameter local group of diffeomorphisms of spacetime $M$ onto itself, $%
\{\phi _{\epsilon }:M\longrightarrow M|\left\vert \epsilon \right\vert \ll
1\}$, then%
\begin{equation}
\frac{\partial }{\partial x^{\kappa }}[\sqrt{-|g(x)|}J_{x}^{\kappa
}]=0\Longleftrightarrow \frac{\partial }{\partial y^{\kappa }}[\sqrt{-|g(y)|}%
J_{y}^{\kappa }]=0  \tag{49}
\end{equation}%
that is, the two conservation laws are equivalent to each other.
\end{theorem}

\begin{proof}
Using an identity on the Jacobian $\left\vert \frac{\partial y}{\partial x}%
\right\vert $ (For its proof see appendix C)%
\begin{equation}
\frac{\partial }{\partial x^{\kappa }}\left\vert \frac{\partial y}{\partial x%
}\right\vert =\left\vert \frac{\partial y}{\partial x}\right\vert \frac{%
\partial }{\partial y^{\lambda }}\left( \frac{\partial y^{\lambda }}{%
\partial x^{\kappa }}\right)  \tag{50}
\end{equation}%
we get%
\begin{align}
& \frac{\partial }{\partial x^{\kappa }}[\sqrt{-|g(x)|}J_{x}^{\kappa }]=%
\frac{\partial y^{\lambda }}{\partial x^{\kappa }}\frac{\partial }{\partial
y^{\lambda }}\left[ \sqrt{-|g(y)|}\left\vert \left\vert \frac{\partial y}{%
\partial x}\right\vert \right\vert J_{x}^{\kappa }\right]  \notag \\
& =\frac{\partial y^{\lambda }}{\partial x^{\kappa }}\frac{\partial }{%
\partial y^{\lambda }}\left[ \sqrt{-|g(y)|}J_{x}^{\kappa }\right] \left\vert
\left\vert \frac{\partial y}{\partial x}\right\vert \right\vert +\frac{%
\partial }{\partial x^{\kappa }}\left\vert \left\vert \frac{\partial y}{%
\partial x}\right\vert \right\vert \left[ \sqrt{-|g(y)|}J_{x}^{\kappa }%
\right]  \notag \\
& =\frac{\partial }{\partial y^{\lambda }}\left[ \sqrt{-|g(y)|}\frac{%
\partial y^{\lambda }}{\partial x^{\kappa }}J_{x}^{\kappa }\right]
\left\vert \left\vert \frac{\partial y}{\partial x}\right\vert \right\vert
-\left( \frac{\partial }{\partial y^{\lambda }}\frac{\partial y^{\lambda }}{%
\partial x^{\kappa }}\right) \left[ \sqrt{-|g(y)|}J_{x}^{\kappa }\right]
\left\vert \left\vert \frac{\partial y}{\partial x}\right\vert \right\vert 
\notag \\
& +\frac{\partial ^{2}y^{\lambda }}{\partial x^{\kappa }\partial x^{\alpha }}%
\frac{\partial x^{\alpha }}{\partial y^{\lambda }}\left\vert \left\vert 
\frac{\partial y}{\partial x}\right\vert \right\vert \left[ \sqrt{-|g(y)|}%
J_{x}^{\kappa }\right]  \notag \\
& =\frac{\partial }{\partial y^{\lambda }}\left[ \sqrt{-|g(y)|}\frac{%
\partial y^{\lambda }}{\partial x^{\kappa }}J_{x}^{\kappa }\right]
\left\vert \left\vert \frac{\partial y}{\partial x}\right\vert \right\vert 
\tag{51}
\end{align}%
Hence the proof is reduced to proving%
\begin{equation}
\frac{\partial y^{\lambda }}{\partial x^{\kappa }}J_{x}^{\kappa
}=J_{y}^{\lambda }  \tag{52}
\end{equation}%
In fact, it's easy to show that $\delta x^{\kappa }$ is a vector, $\overline{%
\delta }u_{\xi }^{\theta }(x)=\widetilde{u}_{\xi }^{\theta }(x)-u_{\xi
}^{\theta }(x)=\delta u_{\xi }^{\theta }(x)-\delta x^{\rho }\partial _{\rho
}u_{\xi }^{\theta }(x)$ is a $(1,1)$-tensor, and $\overline{\delta }%
g_{\alpha \beta }(x)=\widetilde{g}_{\alpha \beta }(x)-g_{\alpha \beta }(x)$ $%
=\delta g_{\alpha \beta }(x)-\delta x^{\rho }\partial _{\rho }g_{\alpha
\beta }(x)$ is a (0,2)-tensor. Hence the terms in the first brace of $%
J_{x}^{\kappa }$ eqn.(23) are vectors, and the first term in the second
brace is a vector too. The rest of the terms in the second brace are not
vectors individually. However, their sum is a vector. This can be proven
straightforwardly, though tediously. (See Appendix D)
\end{proof}

\begin{remark}
Theorem 3 says, corresponding to the same 1-parameter local group of
diffeomorphisms of spacetime $M$ onto itself, the continuity equations
written in different coordinate systems are equivalent to one another. Our
proof result (eqn.(52)) tells us more than Eqn.(49). It says,

\begin{remark}
The Noether conservation current is\ a vector field over spacetime
independent of coordinates. It should be the density, and current density of
some scalar. Therefore, the two continuity equations in eqn.(49) are the
concervation law of the same scalar.
\end{remark}
\end{remark}

\begin{remark}
All the Noether's conserved quantities are scalars.\ 

\begin{remark}
The pseudotensor problem, hence the non-localizability problem in GR are
really a result of misreading Noether's theorem and mistaking different
geometrical physical objects as one and the same.
\end{remark}
\end{remark}

Einstein, Landau and Lifshitz, et al. did not use Noether's theorem to get
their continuity equation for the total energy-momentum%
\begin{equation*}
\frac{\partial }{\partial x^{\alpha }}[\sqrt{-|g(x)|}(T^{\alpha \beta
}(x)+t^{\alpha \beta }(x))]=0,
\end{equation*}%
\begin{equation}
\forall \text{ coordinate systems }(x^{0},x^{1},x^{2},x^{3})\text{ of }M%
\text{, and indices }\beta .  \tag{53 }
\end{equation}%
where $t^{\alpha \beta }(x)$ is the gravitational energy-momentum
pseudotensor. By using the following theorem, I will show that for each
specified $(x^{0},x^{1},x^{2},x^{3})$ and $\beta $, eqn.(53) determines a
conservation law for some scalar depending on $(x^{0},x^{1},x^{2},x^{3})$
and $\beta $.

\begin{theorem}
Suppose $t$ is an $(r+1,s)$-tensor field on spacetime $M$, $(\xi ^{0}$, $\xi
^{1}$, $\xi ^{2}$, $\xi ^{3})$ is a coordinate system of $M$. If for some
indices $\alpha _{1},\ldots ,\alpha _{r},\beta _{1},\ldots ,\beta _{s}$,%
\begin{equation}
\frac{\partial }{\partial \xi ^{\lambda }}\ [\sqrt{-|g(\xi )|}t_{\ \ \ \ \ \
\ \beta _{1}\cdots \beta _{s}}^{\lambda \alpha _{1}\cdots \alpha _{r}}(\xi
)]=0  \tag{54}
\end{equation}%
then for every coordinate system $(x^{0},x^{1},x^{2},x^{3})$ of $M$,%
\begin{equation}
\frac{\partial }{\partial x^{\lambda }}\ [\sqrt{-|g(x)|}j^{\lambda }(x)]=0 
\tag{55}
\end{equation}%
where $j$ is a tangent field on $M$ defined as follows%
\begin{equation}
j(dx^{\lambda })=t(dx^{\lambda },d\xi ^{\alpha _{1}},\ldots ,d\xi ^{\alpha
_{r}},\frac{\partial }{\partial \xi ^{\beta _{1}}},\ldots ,\frac{\partial }{%
\partial \xi ^{\beta _{s}}})  \tag{56}
\end{equation}
\end{theorem}

\begin{proof}
\begin{align*}
& \frac{\partial }{\partial \xi ^{\lambda }}\ [\sqrt{-|g(\xi )|}t_{\ \ \ \ \
\ \ \beta _{1}\cdots \beta _{s}}^{\lambda \alpha _{1}\cdots \alpha _{r}}(\xi
)]=\frac{\partial }{\partial \xi ^{\lambda }}\ [\sqrt{-|g(\xi )|}j_{\
}^{\lambda }(\xi )] \\
& =\frac{\partial }{\partial \ \xi ^{\lambda }}[\left\vert \frac{\partial x}{%
\partial \xi }\right\vert \frac{\partial \xi ^{\lambda }}{\partial x^{\mu }}%
\sqrt{-|g(x)|}j^{\mu }(x)\ ]
\end{align*}%
\begin{equation*}
=\left\vert \frac{\partial x}{\partial \xi }\right\vert \frac{\partial \xi
^{\lambda }}{\partial x^{\mu }}\frac{\partial }{\partial \ \xi ^{\lambda }}[%
\sqrt{-|g(x)|}j^{\mu }(x)\ ]+\frac{\partial }{\partial \xi ^{\lambda }}%
[\left\vert \frac{\partial x}{\partial \xi }\right\vert \frac{\partial \xi
^{\lambda }}{\partial x^{\mu }}]\sqrt{-|g(x)|}j^{\mu }(x)
\end{equation*}%
\begin{equation}
=\left\vert \frac{\partial x}{\partial \xi }\right\vert \frac{\partial }{%
\partial x^{\mu }}[\sqrt{-|g(x)|}j^{\mu }(\ x\ )].\text{ }\because \frac{%
\partial }{\partial \xi ^{\lambda }}[\left\vert \frac{\partial x}{\partial
\xi }\right\vert \frac{\partial \xi ^{\lambda }}{\partial x^{\mu }}]=0 
\tag{57}
\end{equation}
\end{proof}

\begin{corollary}
When $r=s=0$, theorem 3 tells us: Suppose $j$ is a vector field on spacetime 
$M$, and $(\xi ^{0},\xi ^{1},\xi ^{2},\xi ^{3})$ is a coordinate system of $%
M $. if 
\begin{equation}
\frac{\partial }{\partial \xi ^{\lambda }}\ [\sqrt{-|g(\xi )|}j_{\
}^{\lambda }(\xi )]=0  \tag{58}
\end{equation}%
then for all coordinate systems $(x^{0},x^{1},x^{2},x^{3})$%
\begin{equation}
\frac{\partial }{\partial x^{\lambda }}\ [\sqrt{-|g(x)|}j_{\ }^{\lambda
}(x)]=0  \tag{59}
\end{equation}
\end{corollary}

\begin{example}
Let us get back to eqn.(53). For a specified pair of $%
(x^{0},x^{1},x^{2},x^{3})$ and $\beta $, define vector field $J$%
\begin{equation*}
J^{\lambda }(y)=T(dy^{\lambda },dx^{\beta })+\frac{\partial y^{\lambda }}{%
\partial x^{\alpha }}t^{\alpha \beta }(x),
\end{equation*}%
\begin{equation}
\forall \ \text{coordinate systems }(y^{0},y^{1},y^{2},y^{3})\text{ of }M 
\tag{60}
\end{equation}%
Then we have the following conservation law for some scalar depending on $%
(x^{0},x^{1},x^{2},x^{3})$ and $\beta $.%
\begin{equation}
\frac{\partial }{\partial y^{\alpha }}[\sqrt{-|g(y)|}J^{\alpha
}(y)]=0,\forall \ \text{coordinate systems }(y^{0},y^{1},y^{2},y^{3})\text{
of }M  \tag{61}
\end{equation}%
Therefore, eqn.(53) plus each pair of $(x^{0},x^{1},x^{2},x^{3})$ and $\beta 
$, determines a conservation law of a scalar. We have infinitely many such
conserved scalars. Comparing eqn.(60) and $T^{\alpha \beta }(y)+t^{\alpha
\beta }(y)$, one sees the former is addition of two vector fields, while the
latter is considered addition of a tensor and a pseudotensor field. So, the
new perspective enables us to get rid of the embarrassing situation:
accepting the addition of a tensor and a pseudotensor. It is absurd
geometrically.
\end{example}

\begin{example}
Now let us consider, say, eqn.(30). For specified $(x^{0},x^{1},x^{2},x^{3})$
and $\rho ,\sigma $, define vector field $I$%
\begin{equation}
I^{\lambda }(y)=\frac{\partial y^{\lambda }}{\partial x^{\kappa }}M^{\kappa
\rho \sigma }(x),\forall \ \text{coordinate systems }%
(y^{0},y^{1},y^{2},y^{3})\text{ of }M  \tag{62}
\end{equation}%
Then we have the following conservation law of some scalar depending on $%
(x^{0}$, $x^{1}$, $x^{2}$, $x^{3})$ and $\rho $, $\sigma $.%
\begin{equation}
\frac{\partial }{\partial y^{\alpha }}[\sqrt{-|g(y)|}I^{\alpha
}(y)]=0,\forall \ \text{coordinate systems }(y^{0},y^{1},y^{2},y^{3})\text{
of }M  \tag{63}
\end{equation}%
Therefore, eqn.(30) plus each triplet of $(x^{0},x^{1},x^{2},x^{3})$ and $%
\rho ,\sigma $, determines a conservation law of a scalar. We have
infinitely many such conserved scalars.
\end{example}

\begin{example}
In special relativity (SR), the spacetime manifold $M$ is the Minkowski
space. The action functional of a $(1,1)-$tensor field is expressed in any
inertial coordinate system $(x^{0},x^{1},x^{2},x^{3})$ as%
\begin{equation}
A[u]=\int_{x(\Omega )}d^{4}x\mathcal{L}(\eta ,u(x),\partial u(x))  \tag{64}
\end{equation}%
where $\mathcal{L}(\eta ,u(x),\partial u(x))$ is the sum of a few scalars
obtained by contracting $\eta ,u(x),$and $\partial u(x)$, and multiplying
the contractions by proper coefficients. The symmetry group of this
dynamical system is the Poincar\'{e} group $\mathcal{P}$. Let $\{\phi
_{\epsilon }:M\longrightarrow M|\epsilon \in \mathbb{R}\}$ be a
1-dimensional subgroup of $\mathcal{P}$. The corresponding infinitesimal
coordinate transformations are 
\begin{align}
x^{\lambda }(p)& \longmapsto \widetilde{x}^{\lambda }(p)=x^{\lambda }(\phi
_{\epsilon }(p))=x^{\lambda }(p)+\delta x^{\lambda }(p),  \notag \\
u_{\varphi }^{\theta }(x)& \longmapsto \widetilde{u}_{\varphi }^{\theta }(%
\widetilde{x})=(\phi _{\epsilon \ast }u)_{\varphi }^{\theta }(\widetilde{x}%
)=u_{\varphi }^{\theta }(x)+\delta u_{\varphi }^{\theta }(x),\text{ }  \notag
\\
\overline{\delta }u_{\varphi }^{\theta }(x)& =(\phi _{\epsilon \ast
}u)_{\varphi }^{\theta }(x)-u_{\varphi }^{\theta }(x)=\delta u_{\varphi
}^{\theta }(x)-\delta x^{\lambda }\partial _{\lambda }u_{\varphi }^{\theta
}(x)  \tag{65}
\end{align}%
The continuity eqn.(14) now reads%
\begin{equation}
\partial _{\lambda }[\delta x^{\lambda }\mathcal{L}+\overline{\delta }%
u_{\varphi }^{\theta }(x)\frac{\partial \mathcal{L}}{\partial \partial
_{\lambda }u_{\varphi }^{\theta }(x)}]=0  \tag{66}
\end{equation}%
where $(x^{0},x^{1},x^{2},x^{3})$ is an inertial coordinate system. Rewrite
eqn.(66) as%
\begin{equation}
\partial _{\lambda }\{\sqrt{-|g(x)|}[\delta x^{\lambda }\mathcal{L}+%
\overline{\delta }u_{\varphi }^{\theta }(x)\frac{\partial \mathcal{L}}{%
\partial \nabla _{\lambda }u_{\varphi }^{\theta }(x)}]\}=0  \tag{67}
\end{equation}%
Note that $\delta x^{\lambda }\mathcal{L}+\overline{\delta }u_{\varphi
}^{\theta }(x)\frac{\partial \mathcal{L}}{\partial \nabla _{\lambda
}u_{\varphi }^{\theta }(x)}$ is a vector field. Using theorem 8, we see the
continuity eqn.(67) holds in any coordinate system. It is a conservation law
of some scalar. So, all the conserved quantities in SR are scalars under
general coordinate transformations too.
\end{example}

There have been lots of elegant presentations of Noether's theorem in the
literature since 1918. But it has never been used to disprove the long
existing pseudotensor and non-localizability problem, which is one of the
fundamental issues in GR. What has prevented people to do so? Their reasons
are:

(i) According to the principle of general covariance, the same geometrical
physical object should be expressed in all coordinate systems the same way;
hence inequality (46) is not a wrong comparison.

(ii) The non-localizability of gravitational energy is the consequence of
the physical principle of equivalence, and it is also the consequence of the
following mathematical fact: For any geodesic $G$ in spacetime, one can
always choose coordinates such that $g|_{p}=\eta |_{p}$, and all the
Chritoffel symbols $\Gamma _{\beta \gamma }^{\alpha }|_{p}=0$, $\forall p\in
G$. Hence\ it is inherent in the theory of general relativity.

Let us examine these reasons in the following.

\section{Principle of general covariance, equivalence principle and
pseudotensor, non-localizability}

\subsection{Principle of general covariance}

According to Einstein, "What we call physics comprises that group of natural
sciences which base their concepts on measurements; and whose concepts and
propositions lend themselves to mathematical formulation. Its realm is
accordingly defined as that part of the sum total of our knowledge which is
capable of being expressed in mathematical terms." Therefore, to study
physical processes, one has to choose some reference coordinate systems
first. The physical laws are objective. If their expressions depend on the
reference coordinate systems chosen by individuals, they are certainly not
being formulated properly. Therefore the principle of general covariance
requires all the physical laws be expressed in different reference
coordinate systems the same way. It is important, however, to distinguish
general physical laws and concrete physical processes (or concrete physical
quantities). The principle of general covariance also requires any concrete
physical process be observed (or any concrete physical quantity be measured)
from different reference coordinate systems the same way (All the reference
coordinate systems are the same good for observing and measuring). However,
this does not mean that a concrete physical process (or a concrete physical
quantity) should have the same relation to different reference coordinate
systems.

To illustrate the above idea, let us consider the following examples.

Einstein's field equation (20) is a general law of physics. It has the same
form in all reference coordinate systems. The energy-momentun tensor of
matter $T^{\alpha \beta }$ (21) is part of Einstein's field equation, hence
it has the same form in all reference coordinate systems. For a given
dynamical system in GR, there is only one energy-momentun tensor of matter,
which is a symmetrical (2,0)-tensor field, independant of coordinates, but
not "conserved" (See Appendix A).

While for a given dynamical system in GR, any vector field on spacetime
generates a 1-parameter local group of diffeomorphisms of spacetime $M$ onto
itself, and determines a conserved scalar independently of the coordinates.
This conserved scalar is a concrete physical quantity. Its expressions in
terms of coordinates, $g,u$ and derivatives of $g,u$, in different
coordinate systems are different (A concrete physical quantity has different
relations to different coordinate systems); but it is measured from all
coordinate systems the same way (by using eqn.(23)). All the conserved
scalars (including infinite canonical energy-momentums of the dynamical
system) are concrete physical quantities. They are different from the matter
energy-momentum $T^{\alpha \beta }$, which is part of a general law of
physics. The latter's expression in terms of coordinates, $g,u$ and
derivatives of $g,u$, does not change with coordinate systems; while the
former's expressions do.

\subsection{Equivalence principle}

Let us examine the following example, Landau-Lifshitz's gravitational
energy-momentum pseudotensor[1]%
\begin{align}
t^{\mu \nu }(x)& =\frac{1}{16\pi G}\{[2\Gamma _{\lambda \alpha }^{\beta
}(x)\Gamma _{\beta \rho }^{\rho }(x)-\Gamma _{\lambda \rho }^{\beta
}(x)\Gamma _{\alpha \beta }^{\rho }(x)-\Gamma _{\beta \lambda }^{\beta
}(x)\Gamma _{\alpha \rho }^{\rho }(x)]  \notag \\
& [g^{\mu \lambda }(x)g^{\nu \alpha }(x)-g^{\mu \nu }(x)g^{\lambda \alpha
}(x)]+g^{\mu \lambda }(x)g^{\alpha \beta }(x)[\Gamma _{\lambda \rho }^{\nu
}(x)\Gamma _{\alpha \beta }^{\rho }(x)+  \notag \\
& \Gamma _{\alpha \beta }^{\nu }(x)\Gamma _{\lambda \rho }^{\rho }(x)-\Gamma
_{\beta \rho }^{\nu }(x)\Gamma _{\lambda \alpha }^{\rho }(x)-\Gamma
_{\lambda \alpha }^{\nu }(x)\Gamma _{\beta \rho }^{\rho }(x)]  \notag \\
& +g^{\nu \lambda }(x)g^{\alpha \beta }(x)[\Gamma _{\lambda \rho }^{\mu
}(x)\Gamma _{\alpha \beta }^{\rho }(x)+\Gamma _{\alpha \beta }^{\mu
}(x)\Gamma _{\lambda \rho }^{\rho }(x)-\Gamma _{\beta \rho }^{\mu }(x)\Gamma
_{\lambda \alpha }^{\rho }(x)  \notag \\
& -\Gamma _{\lambda \alpha }^{\mu }(x)\Gamma _{\beta \rho }^{\rho
}(x)]+g^{\lambda \alpha }(x)g^{\beta \rho }(x)[\Gamma _{\lambda \beta }^{\mu
}(x)\Gamma _{\alpha \rho }^{\nu }(x)-\Gamma _{\lambda \alpha }^{\mu
}(x)\Gamma _{\beta \rho }^{\nu }(x)]\}.  \tag{68}
\end{align}%
The Einstein field equation has the following solution: $T^{\alpha \beta
}(x)\equiv 0$, $g_{\alpha \beta }(x)\equiv \eta _{\alpha \beta }$. The
spacetime is the Minkowski space, and we can choose a coordinate system of
inertia $(x^{0},x^{1},x^{2},x^{3})$. In this coordinate system, 
\begin{equation}
t^{\mu \nu }(x)\equiv 0,\forall 0\leqslant \mu ,\nu \leqslant 3.  \tag{69}
\end{equation}%
Let us switch to coordinate system $(y^{0},y^{1},y^{2},y^{3})=:(t,r,\theta
,\varphi )$, such that%
\begin{align}
t& =x^{0},r=\sqrt{(x^{1})^{2}+(x^{2})^{2}+(x^{3})^{2}},  \notag \\
\theta & =\cos ^{-1}\frac{x^{3}}{\sqrt{(x^{1})^{2}+(x^{2})^{2}+(x^{3})^{2}}},
\notag \\
\varphi & =\tan ^{-1}\frac{x^{2}}{x^{1}}.  \tag{70}
\end{align}%
In this spherical polar coordinates, 
\begin{equation}
t^{00}(y)=\frac{-(3+\cot ^{2}\theta )}{8\pi Gr^{2}}<0.  \tag{71}
\end{equation}%
Note that the coordinate system $(x^{0},x^{1},x^{2},x^{3})$ and the
coordinate system $(y^{0},y^{1},y^{2},y^{3})=:(t,r,\theta ,\varphi )$, are
not in relative motion. The coordinate trnsformation between them is only a
purely spacial one. Yet for any spacetime point $p\in M,$we have $t^{\mu \nu
}(x)=0;$and $t^{00}(y)<0$. This counter-example shows that the
non-localizability of gravitational energy-momentum in GR, can not be
physically attributed to the local indistinguishability of inertial force
and gravity.

One might argue that for any time-like geodesic $G(\tau )$ in spacetme, one
can always switch to the freely falling nonspinning observor's proper
coordinate system, so that the metric $g|_{p}=\eta $, and all the
Christoffel simbols $\Gamma |_{p}=0$, for all $p\in G(\tau )$. For this
observor there is no gravitational field around him. And this explains
non-localizability of gravitational energy-momentum.

It is worth noting, however, that the term "gravitational field" means
different things in pre-GR physics and in GR. In pre-GR physics, it means
gravitational force field described by field strength (gravity
acceleration). While in GR, it means spacetime bending described by the
metric field. These two descriptions are not equivalent to each other. The
former is only an approximation, effective in some aspect in the limiting
case of weak field and low velocity. The notion of gravitational force field
has a fatal difficulty: its field energy density is negative without lower
bound. While spacetime metric field explains the equivalence principle
perfectly. Therefore, there is no room for gravitational force field in GR.
Metric field description contains all physics from the equivalence
principle, so, as Synge has suggested, the midwife of GR be buried with
appropriate honours[4].

According GR, any freely falling body's world-line is a geodesic in
spacetime. That is the physics. Coordinate transformatiom does not change
geodesic, it does not change physics either.

According to modern differential geometry, the connection on a vector bundle
is a coordinate free notion. For any given point $p$ on the base space,
there always exist local frame fields such that in them the connection
matrix at $p$ vanishes. Consider the 2-dimensional sphere in $\mathbb{R}^{3}$
(the surface of our planet). If we switch to the longitude-latitude local
coordinate system, the Christoffel symbols at all points on the equator
vanish. The geometry does change at all. And it has nothing to do with
gravitation! Even though connection matrix transfomes under local frame
field transformatons in a way different from the tensorial way of
transformation, connection is still a notion independent of coordinates. In
geometry, an affine connection space or a general Riemannian space is
locally flat, if and only if for any point $p$ in the space, there exist
local coordinate system $(x^{0},x^{1},x^{2},x^{3})$ and open neighbourhood $%
U $ of $p$, such that $\Gamma |_{q}=0,\forall q\in U$. It should not be read
as: for any point $p$ in the space, there exist local coordinate system $%
(x^{0},x^{1},x^{2},x^{3})$ such that $\Gamma |_{p}=0$.

Both geometry and physics pursue objective scientific truth which does not
depend on individual's subjective will. To study physical and geometrical
problems, one has to choose a coordinate system first. A good geometrical
notion should be independent of coordinates, a good physical notion should
be independent of coordinates either. That is the reason why the notion of
gravitational force field should be abandoned in GR.

Now we are in a position to review the issues of conservation of
energy-momentum and the gravitational energy-momentum in GR. These will be
done in a later paper.

\appendix

\section{Appendix}

\begin{proposition}
Suppose $T^{\mu\nu}(x)$ is a (2,0)-tensor field over spacetime. If\ in all
coordinate systems $(x^{0},x^{1},x^{2},x^{3})$ 
\begin{equation}
\frac{\partial}{\partial x^{\mu}}[\sqrt{-|g(x)|}T^{\mu\nu}(x)]=0  \tag{A1}
\end{equation}
then $T^{\mu\nu}(x)=-T^{\nu\mu}(x)$.
\end{proposition}

\begin{proof}
Using the following identity%
\begin{align}
\frac{\partial}{\partial x^{\mu}}[\sqrt{-|g(x)|}T^{\mu\nu}(x)] & =\sqrt{%
-|g(x)|}\frac{\partial}{\partial x^{\mu}}T^{\mu\nu}(x)+\sqrt {-|g(x)|}%
\Gamma_{\mu\lambda}^{\mu}(x)T^{\lambda\nu}(x)  \notag \\
& =\sqrt{-|g(x)|}\nabla_{\mu}T^{\mu\nu}(x)-\sqrt{-|g(x)|}\Gamma_{\mu\lambda
}^{\nu}(x)T^{\mu\lambda}(x)  \tag{A2}
\end{align}
and the condition given above Eqn.(A1), one gets, in all coordinate systems $%
(x^{0},x^{1},x^{2},x^{3})$ 
\begin{equation}
\nabla_{\mu}T^{\mu\nu}(x)=\Gamma_{\mu\lambda}^{\nu}(x)T^{\mu\lambda}(x) 
\tag{A3}
\end{equation}
For any given point $p$ of spacetime, there exists local inertial coordinate
systems of $p$. We get, in any local inertial coordinate system of $p$, say, 
$(y^{0},y^{1},y^{2},y^{3})$%
\begin{equation}
\nabla_{\mu}T^{\mu\nu}(y)|_{p}=0  \tag{A4}
\end{equation}
Due to the tenson property of $\nabla_{\mu}T^{\mu\nu}$, this is true for all
coordinate systems. Due to the arbitrariness of point $p\in M$, we get for
any coordinate system, say, $(x^{0},x^{1},x^{2},x^{3})$, and everywhere in
spacetime 
\begin{equation}
\nabla_{\mu}T^{\mu\nu}(x)=\Gamma_{\mu\lambda}^{\nu}(x)T^{\mu\lambda}(x)=0 
\tag{A5}
\end{equation}
In the local inertial coordinate system of $p$, $(y^{0},y^{1},y^{2},y^{3})$,
let%
\begin{equation*}
\lbrack T^{\mu\nu}(y)+T^{\nu\mu}(y)]_{p}=:C^{\mu\nu}=C^{\nu\mu}
\end{equation*}
Transform to a new coordinate system $(z^{0},z^{1},z^{2},z^{3})$ such that%
\begin{equation*}
y^{\lambda}\equiv(y^{\lambda})_{p}+x^{\lambda}+\sum\nolimits_{\mu\nu}\frac {1%
}{2}C^{\mu\nu}x^{\mu}x^{\nu}
\end{equation*}
In this new coordinate system%
\begin{equation*}
T^{\mu\nu}(z)|_{p}=T^{\mu\nu}(y)|_{p}
\end{equation*}%
\begin{equation*}
\Gamma_{\mu\lambda}^{\nu}(z)|_{p}=C^{\mu\nu}
\end{equation*}%
\begin{equation*}
\lbrack\Gamma_{\mu\lambda}^{\nu}(z)T^{\mu\lambda}(z)]_{p}=\frac{1}{2}%
\sum\nolimits_{\mu\nu}(C^{\mu\nu})^{2}=0
\end{equation*}
Hence%
\begin{equation*}
\lbrack T^{\mu\nu}(z)+T^{\nu\mu}(z)]_{p}=0
\end{equation*}
For any coordinate system, say, $(x^{0},x^{1},x^{2},x^{3})$, and everywhere
in spacetime%
\begin{equation}
T^{\mu\nu}(x)+T^{\nu\mu}(x)=0  \tag{A6}
\end{equation}
\end{proof}

\begin{proposition}
Suppose $T^{\mu \nu }(x)$ is a skew symmetric (2,0)-tensor field over
spacetime $M$. If\ in some coordinate system $(x^{0},x^{1},x^{2},x^{3})$%
\begin{equation}
\frac{\partial }{\partial x^{\mu }}[\sqrt{-|g(x)|}T^{\mu \nu }(x)]=0 
\tag{A7}
\end{equation}%
then it holds for all the coordinate systems.
\end{proposition}

Eqn.(A1) is considered the continuity equation for some tangent vector $%
P^{\nu }$ whose density and current density is $T^{\mu \nu }$, by Einstein,
Landau, et al. In a flat spacetime, we can talk about the sum of $(r,s)$%
-tensors distribued at different spacetime points. But in a curved
spacetime, we can't, unless $r=s=0$. In a curved coordinate system $%
(x^{0},x^{1},x^{2},x^{3})$, the expression $\int\nolimits_{x(\Sigma
)}ds_{\mu }(x)[\sqrt{-|g(x)|}T^{\mu \nu }(x)]$ ($\Sigma $ is a space-like
hypersurface) is not the $\nu $-component of the sum vector $P$ over $\Sigma 
$\ which can not be difined in a curved spacetime.

\section{Appendix}

The Lagragian density of classical fields $L$, is a function of the
coordinates, field quatities, and their derivatives of up to the $N$-th
order. Because not all the arguments are independent, such as $%
\partial_{\mu}\partial_{\nu}u_{\eta}^{\xi}(x)=\partial_{\nu}\partial_{%
\mu}u_{\eta}^{\xi}(x)$, $g_{\alpha\beta}(x)=g_{\beta\alpha}(x)$, etc., there
are infinitely many different function forms for $L$. This causes
indefiniteness of derivatives, such as $\frac{\partial}{\partial\partial_{%
\mu}\partial_{\nu}u_{\eta}^{\xi }(x)}L,\frac{\partial}{\partial
g_{\alpha\beta}(x)}L$. If we drop all the redundent variables, then the
Einstein summation convention can no longer be used, and the expressions
will become awfully complicated, especially for a large $N$. In order to
keep the formulaes neat, physists usually treat it in a different way. Here
we will illustrate their method by using the lagrangian density for vacuum
Einstein's equation, $R$ (Ricci's scalar curvature).

$R$ is a function of 16 $g_{\alpha\beta}(x)$'s, 64 $\partial_{\mu}g_{\alpha%
\beta}(x)$'s, and 256 $\partial_{\mu}\partial_{\nu}g_{\alpha\beta }(x)$'s.
Because $g_{\alpha\beta}(x)=g_{\beta\alpha}(x)$, $\partial_{\mu
}g_{\alpha\beta}(x)=\partial_{\mu}g_{\beta\alpha}(x)$ and $\partial_{\mu
}\partial_{\nu}g_{\alpha\beta}(x)=\partial_{\nu}\partial_{\mu}g_{\alpha\beta
}(x)=\partial_{\mu}\partial_{\nu}g_{\beta\alpha}(x)=\partial_{\nu}\partial_{%
\mu}g_{\beta\alpha}(x)$, there are only 150 independent variables among
them. We will choose 10 $g_{\alpha\beta}(x)$'s, 40 $\partial_{\mu
}g_{\alpha\beta}(x)$'s, and 100 $\partial_{\mu}\partial_{\nu}g_{\alpha\beta
}(x)$'s $(\alpha\leq\beta,\mu\leq\nu)$, for the independent variables. As a
function of 336 variables (As a function defined on a 336-demensional
domain), $R$ can take infinite different forms, say, $\varphi,\psi,\ldots$%
When restricted to the 150-dimensional \textquotedblleft
sub-domain\textquotedblright\ $D$, all of them are the same function of 150
variables. 
\begin{equation}
R|_{D}=\varphi|_{D}=\psi|_{D}=\ldots  \tag{B1}
\end{equation}
Substituting the 150 independent variables for all the variables in $%
\varphi,\psi,\ldots$, we get a unique function%
\begin{equation}
\underline{R}(g_{\alpha\beta}(x),\partial_{\mu}g_{\alpha\beta}(x),\partial
_{\mu}\partial_{\nu}g_{\alpha\beta}(x)),\text{ }(\alpha\leq\beta,\mu\leq\nu)
\tag{B2}
\end{equation}
Substituting $\frac{1}{2}(g_{\alpha\beta}(x)+g_{\beta\alpha}(x))$, $\frac {1%
}{2}(\partial_{\mu}g_{\alpha\beta}(x)+\partial_{\mu}g_{\beta\alpha}(x))$,
and $\frac{1}{4}(\partial_{\mu}\partial_{\nu}g_{\alpha\beta}(x)+\partial_{%
\mu
}\partial_{\nu}g_{\beta\alpha}(x)+\partial_{\nu}\partial_{\mu}g_{\alpha\beta
}(x)+\partial_{\nu}\partial_{\mu}g_{\beta\alpha}(x))$ for $%
g_{\alpha\beta}(x) $, $\partial_{\mu}g_{\alpha\beta}(x)$, and $%
\partial_{\mu}\partial_{\nu }g_{\alpha\beta}(x)$ in $\underline{R}$,
respectively, we get a unique function of all 336 variables, denoted by $%
R(g(x),\partial g(x),\partial ^{2}g(x))$. This \textquotedblleft
standard\textquotedblright\ $R(g(x),\partial g(x),\partial^{2}g(x))$ has the
following property.%
\begin{align}
\frac{\partial R}{\partial g_{\alpha\beta}(x)} & =\frac{\partial R}{\partial
g_{\beta\alpha}(x)}=\frac{1}{2}\frac{\partial\underline{R}}{\partial
g_{\alpha\beta}(x)},\alpha<\beta  \notag \\
\frac{\partial R}{\partial g_{\alpha\alpha}(x)} & =\frac{\partial \underline{%
R}}{\partial g_{\alpha\alpha}(x)}  \tag{B3}
\end{align}%
\begin{align}
\frac{\partial R}{\partial\partial_{\mu}g_{\alpha\beta}(x)} & =\frac {%
\partial R}{\partial\partial_{\mu}g_{\beta\alpha}(x)}=\frac{1}{2}\frac{%
\partial\underline{R}}{\partial\partial_{\mu}g_{\alpha\beta}(x)},\alpha<\beta
\notag \\
\frac{\partial R}{\partial\partial_{\mu}g_{\alpha\alpha}(x)} & =\frac{%
\partial\underline{R}}{\partial\partial_{\mu}g_{\alpha\alpha}(x)}  \tag{B4}
\end{align}%
\begin{align}
\frac{\partial R}{\partial\partial_{\mu}\partial_{\nu}g_{\alpha\beta}(x)} & =%
\frac{\partial R}{\partial\partial_{\mu}\partial_{\nu}g_{\beta\alpha}(x)}=%
\frac{\partial R}{\partial\partial_{\nu}\partial_{\mu}g_{\alpha\beta}(x)}= 
\notag \\
\frac{\partial R}{\partial\partial_{\nu}\partial_{\mu}g_{\beta\alpha}(x)} & =%
\frac{1}{4}\frac{\partial\underline{R}}{\partial\partial_{\mu}g_{\alpha\beta
}(x)},\alpha<\beta,\mu<\nu  \tag{B5}
\end{align}%
\begin{align}
\frac{\partial R}{\partial\partial_{\mu}\partial_{\nu}g_{\alpha\alpha}(x)} &
=\frac{\partial R}{\partial\partial_{\nu}\partial_{\mu}g_{\alpha\alpha}(x)}=%
\frac{1}{2}\frac{\partial\underline{R}}{\partial\partial_{\mu}\partial_{%
\nu}g_{\alpha\alpha}(x)},\mu<\nu  \notag \\
\frac{\partial R}{\partial\partial_{\mu}\partial_{\mu}g_{\alpha\beta}(x)} & =%
\frac{\partial R}{\partial\partial_{\mu}\partial_{\mu}g_{\alpha\beta}(x)}=%
\frac{1}{2}\frac{\partial\underline{R}}{\partial\partial_{\mu}\partial_{%
\mu}g_{\alpha\beta}(x)},\alpha<\beta  \notag \\
\frac{\partial R}{\partial\partial_{\mu}\partial_{\mu}g_{\alpha\alpha}(x)} &
=\frac{\partial\underline{R}}{\partial\partial_{\mu}\partial_{\mu}g_{\alpha%
\alpha}(x)}  \tag{B6}
\end{align}

When calculating the derivatives of $R$, we pretend that all its 336
variables are independent. Thus the indefiniteness problem no longer exists.

From (B1), we have%
\begin{equation*}
\delta\varphi|_{D}=\delta\psi|_{D}
\end{equation*}
While%
\begin{align*}
\delta\varphi|_{D} & =[\frac{\partial\varphi}{\partial g_{\alpha\beta}(x)}%
\delta g_{\alpha\beta}(x)+\frac{\partial\varphi}{\partial\partial_{\mu
}g_{\alpha\beta}(x)}\delta\partial_{\mu}g_{\alpha\beta}(x)+\frac {%
\partial\varphi}{\partial\partial_{\mu}\partial_{\nu}g_{\alpha\beta}(x)}%
\delta\partial_{\mu}\partial_{\nu}g_{\alpha\beta}(x)]|_{D} \\
& =[\sum\nolimits_{\alpha}\frac{\partial\varphi}{\partial g_{\alpha\alpha
}(x)}\delta g_{\alpha\alpha}(x)+\sum\nolimits_{\alpha<\beta}(\frac {%
\partial\varphi}{\partial g_{\alpha\beta}(x)}+\frac{\partial\varphi}{%
\partial g_{\beta\alpha}(x)})\delta g_{\alpha\beta}(x)
\end{align*}%
\begin{align*}
& +\sum\nolimits_{\alpha}\frac{\partial\varphi}{\partial\partial_{\mu
}g_{\alpha\alpha}(x)}\delta\partial_{\mu}g_{\alpha\alpha}(x)+\sum
\nolimits_{\alpha<\beta}(\frac{\partial\varphi}{\partial\partial_{\mu
}g_{\alpha\beta}(x)} \\
& +\frac{\partial\varphi}{\partial\partial_{\mu}g_{\beta\alpha}(x)}%
)\delta\partial_{\mu}g_{\alpha\beta}(x)+\sum\nolimits_{\alpha,\mu}\frac{%
\partial\varphi}{\partial\partial_{\mu}\partial_{\mu}g_{\alpha\alpha }(x)}%
\delta\partial_{\mu}\partial_{\mu}g_{\alpha\alpha}(x)
\end{align*}%
\begin{align*}
& +\sum\nolimits_{\alpha<\beta,\mu}(\frac{\partial\varphi}{\partial
\partial_{\mu}\partial_{\mu}g_{\alpha\beta}(x)}+\frac{\partial\varphi }{%
\partial\partial_{\mu}\partial_{\mu}g_{\beta\alpha}(x)})\delta\partial_{\mu
}\partial_{\mu}g_{\alpha\beta}(x) \\
& +\sum\nolimits_{\alpha,\mu<\nu}(\frac{\partial\varphi}{\partial
\partial_{\mu}\partial_{\nu}g_{\alpha\alpha}(x)}+\frac{\partial\varphi }{%
\partial\partial_{\nu}\partial_{\mu}g_{\alpha\alpha}(x)})\delta\partial
_{\mu}\partial_{\nu}g_{\alpha\alpha}(x)
\end{align*}%
\begin{align}
& +\sum\nolimits_{\alpha<\beta,\mu<\nu}(\frac{\partial\varphi}{\partial
\partial_{\mu}\partial_{\nu}g_{\alpha\beta}(x)}+\frac{\partial\varphi }{%
\partial\partial_{\nu}\partial_{\mu}g_{\alpha\beta}(x)}+  \notag \\
& \frac{\partial\varphi}{\partial\partial_{\mu}\partial_{\nu}g_{\beta\alpha
}(x)}+\frac{\partial\varphi}{\partial\partial_{\nu}\partial_{\mu}g_{\beta%
\alpha}(x)})\delta\partial_{\mu}\partial_{\nu}g_{\alpha\beta}(x)]|_{D} 
\tag{B7}
\end{align}
Because all the variations on the RHS of (B7) are independent, we get 
\begin{equation*}
(\frac{\partial\varphi}{\partial g_{\alpha\beta}(x)}+\frac{\partial\varphi }{%
\partial g_{\beta\alpha}(x)})|_{D}=(\frac{\partial\psi}{\partial
g_{\alpha\beta}(x)}+\frac{\partial\psi}{\partial g_{\beta\alpha}(x)})|_{D}
\end{equation*}%
\begin{equation*}
(\frac{\partial\varphi}{\partial\partial_{\mu}g_{\alpha\beta}(x)}+\frac{%
\partial\varphi}{\partial\partial_{\mu}g_{\beta\alpha}(x)})|_{D}=(\frac{%
\partial\psi}{\partial\partial_{\mu}g_{\alpha\beta}(x)}+\frac {\partial\psi}{%
\partial\partial_{\mu}g_{\beta\alpha}(x)})|_{D}
\end{equation*}

\begin{equation*}
(\frac{\partial \varphi }{\partial \partial _{\mu }\partial _{\nu }g_{\alpha
\beta }(x)}+\frac{\partial \varphi }{\partial \partial _{\nu }\partial _{\mu
}g_{\alpha \beta }(x)}+\frac{\partial \psi }{\partial \partial _{\mu
}\partial _{\nu }g_{\beta \alpha }(x)}+\frac{\partial \psi }{\partial
\partial _{\nu }\partial _{\mu }g_{\beta \alpha }(x)})|_{D}
\end{equation*}%
\begin{equation}
=(\frac{\partial \psi }{\partial \partial _{\mu }\partial _{\nu }g_{\alpha
\beta }(x)}+\frac{\partial \psi }{\partial \partial _{\nu }\partial _{\mu
}g_{\alpha \beta }(x)}+\frac{\partial \psi }{\partial \partial _{\mu
}\partial _{\nu }g_{\beta \alpha }(x)}+\frac{\partial \psi }{\partial
\partial _{\nu }\partial _{\mu }g_{\beta \alpha }(x)})|_{D}  \tag{B8}
\end{equation}%
This tells us, say,%
\begin{equation}
\frac{\partial R}{\partial g_{\alpha \beta }(x)}|_{D}=\frac{1}{2}(\frac{%
\partial \varphi }{\partial g_{\alpha \beta }(x)}+\frac{\partial \varphi }{%
\partial g_{\beta \alpha }(x)})|_{D}  \tag{B9}
\end{equation}%
\begin{equation}
\frac{\partial R}{\partial \partial _{\mu }g_{\alpha \beta }(x)}|_{D}=\frac{1%
}{2}(\frac{\partial \varphi }{\partial \partial _{\mu }g_{\alpha \beta }(x)}+%
\frac{\partial \varphi }{\partial \partial _{\mu }g_{\beta \alpha }(x)})|_{D}
\tag{B10}
\end{equation}%
\begin{align}
\frac{\partial R}{\partial \partial _{\mu }\partial _{\nu }g_{\alpha \beta
}(x)}|_{D}& =\frac{1}{4}(\frac{\partial \varphi }{\partial \partial _{\mu
}\partial _{\nu }g_{\alpha \beta }(x)}+\frac{\partial \varphi }{\partial
\partial _{\mu }\partial _{\nu }g_{\beta \alpha }(x)}  \notag \\
& +\frac{\partial \varphi }{\partial \partial _{\nu }\partial _{\mu
}g_{\alpha \beta }(x)}+\frac{\partial \varphi }{\partial \partial _{\nu
}\partial _{\mu }g_{\beta \alpha }(x)})|_{D}  \tag{B11}
\end{align}%
where $R(g(x),\partial g(x),\partial ^{2}g(x))$ is the \textquotedblleft
standard\textquotedblright\ expression for $R$, and $\varphi (g(x)$, $%
\partial g(x)$, $\partial ^{2}g(x))$ is any expression from (B1). Eqns.(B9),
(B10) and (B11) tell us,%
\begin{align*}
\frac{\partial R}{\partial g_{\alpha \beta }(x)}\delta g_{\alpha \beta
}(x)|_{D}& =\frac{\partial \varphi }{\partial g_{\alpha \beta }(x)}\delta
g_{\alpha \beta }(x)|_{D}, \\
\frac{\partial R}{\partial \partial _{\mu }g_{\alpha \beta }(x)}\delta
\partial _{\rho }g_{\alpha \beta }(x)|_{D}& =\frac{\partial \varphi }{%
\partial \partial _{\mu }g_{\alpha \beta }(x)}\delta \partial _{\rho
}g_{\alpha \beta }(x)|_{D},
\end{align*}%
\begin{equation}
\frac{\partial R}{\partial \partial _{\mu }\partial _{\nu }g_{\alpha \beta
}(x)}\delta \partial _{\rho }\partial _{\sigma }g_{\alpha \beta }(x)|_{D}=%
\frac{\partial \varphi }{\partial \partial _{\mu }\partial _{\nu }g_{\alpha
\beta }(x)}\delta \partial _{\rho }\partial _{\sigma }g_{\alpha \beta
}(x)|_{D},\ldots  \tag{B12}
\end{equation}

\section{Appendix}

\begin{proposition}
\begin{align}
\frac{\partial}{\partial x^{\kappa}}\left\vert \frac{\partial y}{\partial x}%
\right\vert & =\left\vert \frac{\partial y}{\partial x}\right\vert \frac{%
\partial}{\partial y^{\lambda}}\left( \frac{\partial y^{\lambda}}{\partial
x^{\kappa}}\right)  \tag{C1} \\
\frac{\partial}{\partial x^{\lambda}}[\left\vert \frac{\partial y}{\partial x%
}\right\vert \frac{\partial x^{\lambda}}{\partial y^{\mu}}] & =0
\end{align}
\end{proposition}

\begin{proof}
\begin{align}
\frac{\partial}{\partial x^{\kappa}}\left\vert \frac{\partial y}{\partial x}%
\right\vert & =\frac{\partial}{\partial x^{\kappa}}(\frac{\partial
y^{\lambda}}{\partial x^{\alpha}})\frac{\partial}{\partial(\frac{\partial
y^{\lambda}}{\partial x^{\alpha}})}\left\vert \frac{\partial y}{\partial x}%
\right\vert  \notag \\
& =\frac{\partial^{2}y^{\lambda}}{\partial x^{\kappa}\partial x^{\alpha}}%
\frac{\partial x^{\alpha}}{\partial y^{\lambda}}\left\vert \frac{\partial y}{%
\partial x}\right\vert  \notag \\
& =\frac{\partial}{\partial y^{\lambda}}(\frac{\partial y^{\lambda}}{%
\partial x^{\kappa}})\left\vert \frac{\partial y}{\partial x}\right\vert 
\tag{C2}
\end{align}
\end{proof}

\section{Appendix}

\begin{proposition}
Let%
\begin{align}
I_{x}^{\kappa} & =:(\frac{\partial R}{\partial\partial_{\kappa}g_{\alpha%
\beta}(x)}-\partial_{\mu}\frac{\partial R}{\partial\partial_{\kappa
}\partial_{\mu}g_{\alpha\beta}(x)}-\Gamma_{\nu\mu}^{\nu}(x)\frac{\partial R}{%
\partial\partial_{\kappa}\partial_{\mu}g_{\alpha\beta}(x)})\times  \notag \\
& (\delta g_{\alpha\beta}(x)-\delta x^{\rho}\partial_{\rho}g_{\alpha\beta
}(x))+\frac{\partial R}{\partial\partial_{\kappa}\partial_{\mu}g_{\alpha%
\beta }(x)}\partial_{\mu}(\delta g_{\alpha\beta}(x)-\delta
x^{\rho}\partial_{\rho }g_{\alpha\beta}(x)).  \tag{D1}
\end{align}
Then%
\begin{equation}
I_{y}^{\lambda}=\frac{\partial y^{\lambda}}{\partial x^{\kappa}}%
I_{x}^{\kappa }.  \tag{D2}
\end{equation}
\end{proposition}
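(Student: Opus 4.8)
The plan is to verify (B2) by a direct change-of-variables computation, leaning on two facts already established: that $R$ is a scalar, so the function $R(g,\partial g,\partial^{2}g)$ has the same functional form in every chart (with the convention of the Appendix that the components of $g$, $\partial g$, $\partial^{2}g$ are treated as independent arguments), and that $\overline{\delta}g_{\alpha\beta}=\delta g_{\alpha\beta}-\delta x^{\rho}\partial_{\rho}g_{\alpha\beta}$ is a genuine $(0,2)$-tensor. First I would collect, under $x\mapsto y$, the transformation laws of the pieces occurring in $I_{x}^{\kappa}$. Differentiating the statement $R(\cdot)_{x}=R(\cdot)_{y}$ by the chain rule shows that $\partial R/\partial\partial_{\kappa}\partial_{\mu}g_{\alpha\beta}$ transforms as a pure $(4,0)$-tensor — because $\partial^{2}g(y)$ is the only $y$-argument of $R$ that involves $\partial^{2}g(x)$ — whereas $\partial R/\partial\partial_{\kappa}g_{\alpha\beta}$ transforms with an extra inhomogeneous piece of the shape $(\partial R/\partial\partial^{2}g)\times(\partial^{2}y/\partial x^{2})$, since $\partial^{2}g(y)$ does also involve $\partial g(x)$. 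In the same spirit: $\partial_{\mu}\overline{\delta}g_{\alpha\beta}$ acquires an inhomogeneous $(\partial^{2}y/\partial x^{2})\,\overline{\delta}g$-term; $\Gamma^{\nu}_{\nu\mu}$ its familiar inhomogeneous term; and $\partial_{\mu}[\partial R/\partial\partial_{\kappa}\partial_{\mu}g_{\alpha\beta}]$, being $\partial_{\mu}$ of a tensor, generates inhomogeneous terms with $\partial^{2}y/\partial x^{2}$ and $\partial^{3}y/\partial x^{3}$ when the derivative hits the Jacobian factors.

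Substituting all of this into $I_{y}^{\lambda}$ and collecting, the homogeneous parts assemble at once into $(\partial y^{\lambda}/\partial x^{\kappa})\,I_{x}^{\kappa}$, and (B2) reduces to the vanishing of the leftover terms, each of them $(\partial R/\partial\partial^{2}g)$ contracted with $\overline{\delta}g_{\alpha\beta}$ and with various first, second and third partials of the transition map. This cancellation is the one real obstacle, and it is where the precise combination in (B1) earns its keep: one must see that the $\partial^{3}y/\partial x^{3}$ contributions coming from the transformation of $\partial^{2}g$ inside $(\partial R/\partial\partial_{\kappa}g_{\alpha\beta})\overline{\delta}g_{\alpha\beta}$ are killed by those from $\partial_{\mu}[\partial R/\partial\partial_{\kappa}\partial_{\mu}g_{\alpha\beta}]\,\overline{\delta}g_{\alpha\beta}$, using the total symmetry of $\partial^{3}y$ in its lower indices and the symmetries of $\partial R/\partial\partial^{2}g$; and that the quadratic $(\partial^{2}y/\partial x^{2})^{2}$ contributions cancel among the inhomogeneous part of $\partial_{\mu}\overline{\delta}g_{\alpha\beta}$, the inhomogeneous part of $\Gamma^{\nu}_{\nu\mu}$ times $(\partial R/\partial\partial^{2}g)\overline{\delta}g_{\alpha\beta}$, and the inhomogeneous part of $\partial R/\partial\partial_{\kappa}g_{\alpha\beta}$. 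In short, the terms $-\partial_{\mu}(\partial R/\partial\partial_{\kappa}\partial_{\mu}g_{\alpha\beta})-\Gamma^{\nu}_{\nu\mu}(\partial R/\partial\partial_{\kappa}\partial_{\mu}g_{\alpha\beta})$ are exactly those that make the anomaly of the bracket in (B1), together with that of $\partial_{\mu}\overline{\delta}g_{\alpha\beta}$, add to zero.

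A cleaner way to see why this cancellation must occur — and a good organizing principle, even a check — is the following. Up to the constant $16\pi G$, $\sqrt{-|g(x)|}\,I_{x}^{\kappa}$ is precisely the surface-term coefficient that the integration by parts of eqn.(1) assigns to $L_{G}=\sqrt{-|g(x)|}\,R/16\pi G$ under the form-variation $\overline{\delta}g_{\alpha\beta}$ (there $K^{\kappa}$ is the coefficient of $\overline{\delta}g_{\alpha\beta}$, $K^{\kappa\mu}$ that of $\partial_{\mu}\overline{\delta}g_{\alpha\beta}$, and $\Gamma^{\nu}_{\nu\mu}$ arises from $\partial_{\mu}\sqrt{-|g|}$). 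The same surface term is produced by the covariant integration by parts $\sqrt{-|g|}\,g^{\mu\nu}\overline{\delta}R_{\mu\nu}=\partial_{\kappa}(\sqrt{-|g|}\,v^{\kappa})$ with $v^{\kappa}=g^{\mu\nu}\overline{\delta\Gamma}{}^{\kappa}_{\mu\nu}-g^{\kappa\mu}\overline{\delta\Gamma}{}^{\nu}_{\nu\mu}$, which is manifestly a vector density, because $\overline{\delta\Gamma}{}^{\rho}_{\mu\nu}$ — the change of the Christoffel symbols induced by the tensor variation $\overline{\delta}g_{\alpha\beta}$ — is a $(1,2)$-tensor. The two expressions for the surface term agree as total divergences; and since the $\partial_{\mu}\overline{\delta}g$-coefficient of $\sqrt{-|g|}\,v^{\kappa}$ turns out to be symmetric in $\kappa$ and $\mu$ once the symmetry of $\overline{\delta}g_{\alpha\beta}$ is used, there is no antisymmetric superpotential ambiguity, so in fact $I_{x}^{\kappa}=v^{\kappa}$ and (B2) is immediate from the vector character of $v^{\kappa}$. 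In this light the role of the $\Gamma^{\nu}_{\nu\mu}$-term in (B1) is exactly to turn the non-covariant peeling integration by parts into the covariant one.
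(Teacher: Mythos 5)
Your first two paragraphs describe essentially the strategy the paper itself follows in Appendix B --- express every ingredient of $I_{x}^{\kappa}$ in $y$-coordinates, let the homogeneous parts assemble into $\frac{\partial y^{\lambda}}{\partial x^{\kappa}}I_{x}^{\kappa}$, and show the inhomogeneous leftovers cancel --- but as written they only \emph{assert} the cancellation; the paper actually performs it, by computing $\frac{\partial R}{\partial \partial_{\kappa}\partial_{\mu}g_{\alpha\beta}}$ and $\frac{\partial R}{\partial \partial_{\kappa}g_{\alpha\beta}}$ explicitly from the closed form (B3) of $R$, packaging the relevant combination into the tensor $C^{\kappa\alpha\beta\mu\xi\eta}=B^{\kappa\alpha\beta\mu\xi\eta}-A^{\kappa\alpha\beta\mu\xi\eta}$, and cancelling the residue term by term in the four groups (A)--(D) of (B10)--(B14). (Note also that no $\partial^{3}y/\partial x^{3}$ terms actually occur here: $\frac{\partial R}{\partial \partial_{\kappa}\partial_{\mu}g_{\alpha\beta}}$ depends on $g$ alone, so differentiating its tensorial transformation law produces only second derivatives of the transition functions.) Your third paragraph, however, is a genuinely different and more conceptual proof, and it is the part that actually closes the argument. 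Identifying $\sqrt{-|g|}\,I_{x}^{\kappa}$ as the $K$-type surface coefficient that eqn.(1) (with $N=2$, using $\partial_{\mu}\sqrt{-|g|}=\sqrt{-|g|}\,\Gamma^{\nu}_{\nu\mu}$, which is precisely the origin of the $\Gamma^{\nu}_{\nu\mu}$ term in (B1)) assigns to $\sqrt{-|g|}\,R$ under the form-variation $\overline{\delta}g_{\alpha\beta}$, and comparing with the Palatini form $\sqrt{-|g|}\,g^{\mu\nu}\overline{\delta}R_{\mu\nu}=\partial_{\kappa}\bigl(\sqrt{-|g|}\,v^{\kappa}\bigr)$, $v^{\kappa}=g^{\mu\nu}\overline{\delta\Gamma}{}^{\kappa}_{\mu\nu}-g^{\kappa\mu}\overline{\delta\Gamma}{}^{\nu}_{\nu\mu}$, does give $I_{x}^{\kappa}=v^{\kappa}$, whence (B2) follows at once from the tensor character of $\overline{\delta\Gamma}{}^{\rho}_{\mu\nu}$ (itself a consequence of $\overline{\delta}g_{\alpha\beta}$ being a $(0,2)$-tensor, which the paper proves). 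The one step you should spell out more carefully is the passage from equality of the two total divergences to equality of the currents: matching the coefficient of $\partial_{\kappa}\partial_{\mu}\overline{\delta}g_{\alpha\beta}$ in the divergence only fixes the part of the $\partial_{\mu}\overline{\delta}g$-coefficient symmetric in $(\kappa,\mu)$; since both candidates are in fact symmetric --- both equal $\tfrac{1}{2}(g^{\alpha\kappa}g^{\beta\mu}+g^{\alpha\mu}g^{\beta\kappa})-g^{\alpha\beta}g^{\kappa\mu}$ --- they coincide, and the coefficient of $\partial_{\mu}\overline{\delta}g_{\alpha\beta}$ then forces the zeroth-order coefficients to coincide as well. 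What your route buys is an explanation of \emph{why} the particular combination (B1) is a vector, with no dependence on the explicit expression for $R$ and immediate generalization to other second-order scalar Lagrangians; what the paper's brute-force route buys is self-containedness, at the cost of the lengthy cancellation you correctly identified as the crux.
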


\begin{proof}
\begin{align}
R& =g^{\alpha \beta }g^{\rho \sigma }(\partial _{\alpha }\partial _{\rho
}g_{\beta \sigma }-\partial _{\alpha }\partial _{\beta }g_{\rho \sigma
})+g^{\alpha \beta }g^{\rho \sigma }g^{\xi \eta }(\partial _{\alpha
}g_{\beta \rho }\partial _{\sigma }g_{\xi \eta }+  \notag \\
& \frac{3}{4}\partial _{\alpha }g_{\rho \xi }\partial _{\beta }g_{\sigma
\eta }-\frac{1}{4}\partial _{\xi }g_{\alpha \beta }\partial _{\eta }g_{\rho
\sigma }-\frac{1}{2}\partial _{\rho }g_{\alpha \xi }\partial _{\eta
}g_{\beta \sigma }-\partial _{\alpha }g_{\beta \rho }\partial _{\xi
}g_{\sigma \eta })  \tag{D3}
\end{align}%
\begin{equation}
\frac{\partial R}{\partial \partial _{\kappa }\partial _{\mu }g_{\alpha
\beta }}=\frac{1}{2}(g^{\alpha \kappa }g^{\beta \mu }+g^{\alpha \mu
}g^{\beta \kappa })-g^{\alpha \beta }g^{\kappa \mu }  \tag{D4}
\end{equation}%
Note that $\frac{\partial R}{\partial \partial _{\kappa }\partial _{\mu
}g_{\alpha \beta }}$ is a $(4,0)$-tensor, symmetrical for $(\kappa ,\mu )$,
and for $(\alpha ,\beta )$.%
\begin{align}
\frac{\partial R}{\partial \partial _{\kappa }g_{\alpha \beta }}& =\partial
_{\mu }g_{\xi \eta }[g^{\alpha \kappa }g^{\beta \mu }g^{\xi \eta }+g^{\alpha
\beta }g^{\kappa \xi }g^{\mu \eta }+\frac{3}{2}g^{\alpha \xi }g^{\beta \eta
}g^{\kappa \mu }  \notag \\
& -\frac{1}{2}g^{\alpha \beta }g^{\kappa \mu }g^{\xi \eta }-g^{\alpha \xi
}g^{\beta \mu }g^{\kappa \eta }-g^{\alpha \kappa }g^{\beta \xi }g^{\mu \eta
}-g^{\alpha \xi }g^{\beta \kappa }g^{\mu \eta }]  \notag \\
& =:\partial _{\mu }g_{\xi \eta }B^{\kappa \alpha \beta \mu \xi \eta } 
\tag{D5}
\end{align}%
\begin{align*}
\partial _{\mu }\frac{\partial R}{\partial \partial _{\kappa }\partial _{\mu
}g_{\alpha \beta }}& =\frac{1}{2}(\partial _{\mu }g^{\alpha \kappa }g^{\beta
\mu }+\partial _{\mu }g^{\alpha \mu }g^{\beta \kappa })-\partial _{\mu
}g^{\alpha \beta }g^{\kappa \mu } \\
& +\frac{1}{2}(g^{\alpha \kappa }\partial _{\mu }g^{\beta \mu }+g^{\alpha
\mu }\partial _{\mu }g^{\beta \kappa })-g^{\alpha \beta }\partial _{\mu
}g^{\kappa \mu }
\end{align*}%
\begin{align}
& =\partial _{\mu }g_{\xi \eta }[-\frac{1}{2}g^{\alpha \xi }g^{\eta \kappa
}g^{\beta \mu }-\frac{1}{2}g^{\alpha \xi }g^{\eta \mu }g^{\beta \kappa
}+g^{\alpha \xi }g^{\eta \beta }g^{\kappa \mu }]  \notag \\
& +\partial _{\mu }g_{\xi \eta }[-\frac{1}{2}g^{\alpha \kappa }g^{\beta \xi
}g^{\eta \mu }-\frac{1}{2}g^{\alpha \mu }g^{\beta \xi }g^{\kappa \mu
}+g^{\alpha \beta }g^{\kappa \xi }g^{\eta \mu }]  \notag \\
& =:\partial _{\mu }g_{\xi \eta }A^{\kappa \alpha \beta \mu \xi \eta } 
\tag{D6}
\end{align}%
Note that $A^{\kappa \alpha \beta \mu \xi \eta }$ and $B^{\kappa \alpha
\beta \mu \xi \eta }$ are $(6,0)$-tensors.Let%
\begin{equation*}
C^{\kappa \alpha \beta \mu \xi \eta }(x)=:B^{\kappa \alpha \beta \mu \xi
\eta }(x)-A^{\kappa \alpha \beta \mu \xi \eta }(x)
\end{equation*}%
\begin{align}
& =-\frac{1}{2}g^{\alpha \xi }(x)g^{\beta \mu }(x)g^{\kappa \eta }(x)+\frac{1%
}{2}g^{\alpha \mu }(x)g^{\beta \xi }(x)g^{\kappa \eta }(x)  \notag \\
& -\frac{1}{2}g^{\alpha \xi }(x)g^{\beta \kappa }(x)g^{\mu \eta }(x)-\frac{1%
}{2}g^{\alpha \kappa }(x)g^{\beta \xi }(x)g^{\mu \eta }(x)  \notag
\end{align}%
\begin{equation}
+\frac{1}{2}g^{\alpha \xi }(x)g^{\beta \eta }(x)g^{\kappa \mu }(x)+g^{\alpha
\kappa }(x)g^{\beta \mu }(x)g^{\xi \eta }(x)-\frac{1}{2}g^{\alpha \beta
}(x)g^{\kappa \mu }(x)g^{\xi \eta }(x)  \tag{D7}
\end{equation}%
\begin{equation}
\Gamma _{\alpha \nu }^{\nu }(x)=\frac{\partial y^{\beta }}{\partial
x^{\alpha }}\Gamma _{\beta \gamma }^{\gamma }(y)+\frac{\partial }{\partial
y^{\sigma }}(\frac{\partial y^{\sigma }}{\partial x^{\alpha }})  \tag{D8}
\end{equation}%
Then%
\begin{align*}
I_{x}^{\kappa }& =[C^{\kappa \alpha \beta \mu \xi \eta }(x)\partial _{\mu
}g_{\xi \eta }(x)-\Gamma _{\mu \nu }^{\nu }(x)\frac{\partial R}{\partial
\partial _{\kappa }\partial _{\mu }g_{\alpha \beta }}]\overline{\delta }%
g_{\alpha \beta }(x) \\
& +\frac{\partial R}{\partial \partial _{\kappa }\partial _{\mu }g_{\alpha
\beta }(x)}\partial _{\mu }\overline{\delta }g_{\alpha \beta }(x)
\end{align*}%
\begin{align*}
& =\frac{\partial x^{\kappa }}{\partial y^{\kappa ^{\prime }}}[C^{\kappa
^{\prime }\alpha ^{\prime }\beta ^{\prime }\mu ^{\prime }\xi ^{\prime }\eta
^{\prime }}(y)\frac{\partial x^{\xi }}{\partial y^{\xi ^{\prime }}}\frac{%
\partial x^{\eta }}{\partial y^{\eta ^{\prime }}}\frac{\partial }{\partial
y^{\mu ^{\prime }}}(\frac{\partial y^{\xi "}}{\partial x^{\xi }}\frac{%
\partial y^{\eta "}}{\partial x^{\eta }}g_{\xi "\eta "}(y)) \\
& -\Gamma _{\nu ^{\prime }\mu \prime }^{\nu ^{\prime }}(y)\frac{\partial R}{%
\partial \partial _{\kappa ^{\prime }}\partial _{\mu ^{\prime }}g_{\alpha
^{\prime }\beta ^{\prime }}(y)}+\frac{\partial }{\partial x^{\mu }}(\frac{%
\partial x^{\mu }}{\partial y^{\mu ^{\prime }}})\frac{\partial R}{\partial
\partial _{\kappa ^{\prime }}\partial _{\mu ^{\prime }}g_{\alpha ^{\prime
}\beta ^{\prime }}(y)}]\overline{\delta }g_{\alpha ^{\prime }\beta ^{\prime
}}(y)
\end{align*}%
\begin{equation*}
+\frac{\partial x^{\kappa }}{\partial y^{\kappa ^{\prime }}}\frac{\partial R%
}{\partial \partial _{\kappa ^{\prime }}\partial _{\mu ^{\prime }}g_{\alpha
^{\prime }\beta ^{\prime }}(y)}\frac{\partial x^{\alpha }}{\partial
y^{\alpha ^{\prime }}}\frac{\partial x^{\beta }}{\partial y^{\beta ^{\prime
}}}\frac{\partial }{\partial y^{\mu ^{\prime }}}[\frac{\partial x^{\alpha }}{%
\partial y^{\alpha "}}\frac{\partial x^{\beta }}{\partial y^{\beta "}}%
\overline{\delta }g_{\alpha "\beta "}(y)]
\end{equation*}%
\begin{align}
& =\frac{\partial x^{\kappa }}{\partial y^{\kappa ^{\prime }}}\{[C^{\kappa
^{\prime }\alpha ^{\prime }\beta ^{\prime }\mu ^{\prime }\xi ^{\prime }\eta
^{\prime }}(y)\partial _{\mu ^{\prime }}g_{\xi ^{\prime }\eta ^{\prime
}}(y)-\Gamma _{\nu ^{\prime }\mu \prime }^{\nu ^{\prime }}(y)\frac{\partial R%
}{\partial \partial _{\kappa ^{\prime }}\partial _{\mu ^{\prime }}g_{\alpha
^{\prime }\beta ^{\prime }}(y)}]\times  \notag \\
& \overline{\delta }g_{\alpha ^{\prime }\beta ^{\prime }}(y)+\frac{\partial R%
}{\partial \partial _{\kappa ^{\prime }}\partial _{\mu ^{\prime }}g_{\alpha
^{\prime }\beta ^{\prime }}(y)}\partial _{\mu ^{\prime }}\overline{\delta }%
g_{\alpha ^{\prime }\beta \prime }(y)\}+\frac{\partial x^{\kappa }}{\partial
y^{\kappa ^{\prime }}}rest  \notag
\end{align}%
\begin{equation*}
=\frac{\partial x^{\kappa }}{\partial y^{\kappa ^{\prime }}}I_{y}^{\kappa
^{\prime }}+\frac{\partial x^{\kappa }}{\partial y^{\kappa ^{\prime }}}rest
\end{equation*}%
where%
\begin{equation*}
rest=:[-C^{\kappa ^{\prime }\alpha ^{\prime }\beta ^{\prime }\mu ^{\prime
}\xi ^{\prime }\eta ^{\prime }}(y)\frac{\partial ^{2}x^{\xi }}{\partial
y^{\mu ^{\prime }}\partial y^{\xi ^{\prime }}}\frac{\partial y^{\xi "}}{%
\partial x^{\xi }}g_{\xi "\eta \prime }(y)
\end{equation*}%
\begin{align}
& -C^{\kappa ^{\prime }\alpha ^{\prime }\beta ^{\prime }\mu ^{\prime }\xi
^{\prime }\eta ^{\prime }}(y)\frac{\partial ^{2}x^{\eta }}{\partial y^{\mu
^{\prime }}\partial y^{\eta ^{\prime }}}\frac{\partial y^{\eta "}}{\partial
x^{\eta }}g_{\xi ^{\prime }\eta "}(y)  \notag \\
& +\frac{\partial }{\partial x^{\mu }}(\frac{\partial x^{\mu }}{\partial
y^{\mu ^{\prime }}})\frac{\partial R}{\partial \partial _{\kappa ^{\prime
}}\partial _{\mu ^{\prime }}g_{\alpha ^{\prime }\beta ^{\prime }}(y)}]%
\overline{\delta }g_{\alpha ^{\prime }\beta ^{\prime }}(y)  \tag{D9}
\end{align}%
We are going to show that $rest$ vanishes. Its first term is%
\begin{align*}
& -C^{\kappa ^{\prime }\alpha ^{\prime }\beta ^{\prime }\mu ^{\prime }\xi
^{\prime }\eta ^{\prime }}(y)\frac{\partial ^{2}x^{\xi }}{\partial y^{\mu
^{\prime }}\partial y^{\xi ^{\prime }}}\frac{\partial y^{\xi "}}{\partial
x^{\xi }}g_{\xi "\eta \prime }(y)\overline{\delta }g_{\alpha ^{\prime }\beta
^{\prime }}(y) \\
& =[g^{\alpha ^{\prime }\xi ^{\prime }}(y)g^{\beta ^{\prime }\kappa ^{\prime
}}(y)g^{\mu ^{\prime }\eta ^{\prime }}(y)-\frac{1}{2}g^{\alpha ^{\prime }\xi
^{\prime }}(y)g^{\beta ^{\prime }\eta ^{\prime }}(y)g^{\kappa ^{\prime }\mu
^{\prime }}(y)
\end{align*}%
\begin{align*}
& -g^{\alpha ^{\prime }\kappa \prime }(y)g^{\beta ^{\prime }\mu ^{\prime
}}(y)g^{\xi ^{\prime }\eta ^{\prime }}(y)+\frac{1}{2}g^{\alpha ^{\prime
}\beta ^{\prime }}(y)g^{\kappa ^{\prime }\mu ^{\prime }}(y)g^{\xi ^{\prime
}\eta ^{\prime }}(y)]\times \\
& \frac{\partial ^{2}x^{\xi }}{\partial y^{\mu ^{\prime }}\partial y^{\xi
^{\prime }}}\frac{\partial y^{\xi "}}{\partial x^{\xi }}g_{\xi "\eta \prime
}(y)\overline{\delta }g_{\alpha ^{\prime }\beta ^{\prime }}(y)
\end{align*}%
\begin{align*}
& =g^{\alpha ^{\prime }\xi ^{\prime }}(y)g^{\beta ^{\prime }\kappa ^{\prime
}}(y)\frac{\partial }{\partial x^{\xi }}(\frac{\partial x^{\xi }}{\partial
y^{\xi ^{\prime }}})\overline{\delta }g_{\alpha ^{\prime }\beta ^{\prime
}}(y)\text{ \ \ \ \ (A)} \\
& -\frac{1}{2}g^{\alpha ^{\prime }\xi ^{\prime }}(y)g^{\kappa ^{\prime }\mu
^{\prime }}(y)\frac{\partial ^{2}x^{\xi }}{\partial y^{\mu ^{\prime
}}\partial y^{\xi ^{\prime }}}\frac{\partial y^{\beta ^{\prime }}}{\partial
x^{\xi }}\overline{\delta }g_{\alpha ^{\prime }\beta ^{\prime }}(y)\text{ \
\ (B)}
\end{align*}%
\begin{align}
& -g^{\alpha ^{\prime }\kappa \prime }(y)g^{\beta ^{\prime }\mu ^{\prime
}}(y)\frac{\partial }{\partial x^{\xi }}(\frac{\partial x^{\xi }}{\partial
y^{\mu ^{\prime }}})\overline{\delta }g_{\alpha ^{\prime }\beta ^{\prime
}}(y)\text{ \ (A)}  \notag \\
& +\frac{1}{2}g^{\alpha ^{\prime }\beta ^{\prime }}(y)g^{\kappa \prime \mu
\prime }(y)\frac{\partial }{\partial x^{\xi }}(\frac{\partial x^{\xi }}{%
\partial y^{\mu \prime }})\overline{\delta }g_{\alpha ^{\prime }\beta
^{\prime }}(y)\text{ \ \ ((C)}  \tag{D10}
\end{align}%
The second term is%
\begin{align*}
& -C^{\kappa ^{\prime }\alpha ^{\prime }\beta ^{\prime }\mu ^{\prime }\xi
^{\prime }\eta ^{\prime }}(y)\frac{\partial ^{2}x^{\eta }}{\partial y^{\mu
^{\prime }}\partial y^{\eta ^{\prime }}}\frac{\partial y^{\eta "}}{\partial
x^{\eta }}g_{\xi ^{\prime }\eta "}(y)\overline{\delta }g_{\alpha ^{\prime
}\beta ^{\prime }}(y) \\
& =[g^{\alpha ^{\prime }\xi ^{\prime }}(y)g^{\beta ^{\prime }\kappa ^{\prime
}}(y)g^{\mu ^{\prime }\eta ^{\prime }}(y)-\frac{1}{2}g^{\alpha ^{\prime }\xi
^{\prime }}(y)g^{\beta ^{\prime }\eta ^{\prime }}(y)g^{\kappa ^{\prime }\mu
^{\prime }}(y)
\end{align*}%
\begin{align*}
& -g^{\alpha ^{\prime }\kappa \prime }(y)g^{\beta ^{\prime }\mu ^{\prime
}}(y)g^{\xi ^{\prime }\eta ^{\prime }}(y)+\frac{1}{2}g^{\alpha ^{\prime
}\beta ^{\prime }}(y)g^{\kappa ^{\prime }\mu ^{\prime }}(y)g^{\xi ^{\prime
}\eta ^{\prime }}(y)]\times \\
& \frac{\partial ^{2}x^{\eta }}{\partial y^{\mu ^{\prime }}\partial y^{\eta
^{\prime }}}\frac{\partial y^{\eta "}}{\partial x^{\eta }}g_{\xi ^{\prime
}\eta "}(y)\overline{\delta }g_{\alpha ^{\prime }\beta ^{\prime }}(y)
\end{align*}%
\begin{align*}
& =g^{\beta ^{\prime }\kappa ^{\prime }}(y)g^{\mu ^{\prime }\eta ^{\prime
}}(y)\frac{\partial ^{2}x^{\eta }}{\partial y^{\mu ^{\prime }}\partial
y^{\eta ^{\prime }}}\frac{\partial y^{\alpha ^{\prime }}}{\partial x^{\eta }}%
\overline{\delta }g_{\alpha ^{\prime }\beta ^{\prime }}(y)\text{ \ \ (D)} \\
& -\frac{1}{2}g^{\beta ^{\prime }\eta ^{\prime }}(y)g^{\kappa ^{\prime }\mu
^{\prime }}(y)\frac{\partial ^{2}x^{\eta }}{\partial y^{\mu ^{\prime
}}\partial y^{\eta ^{\prime }}}\frac{\partial y^{\alpha ^{\prime }}}{%
\partial x^{\eta }}\overline{\delta }g_{\alpha ^{\prime }\beta ^{\prime }}(y)%
\text{ \ (B)}
\end{align*}%
\begin{align}
& -g^{\alpha ^{\prime }\kappa \prime }(y)g^{\beta ^{\prime }\mu ^{\prime
}}(y)\frac{\partial }{\partial x^{\eta }}(\frac{\partial x^{\eta }}{\partial
y^{\mu ^{\prime }}})\overline{\delta }g_{\alpha ^{\prime }\beta ^{\prime
}}(y)\text{ \ \ (A)}  \notag \\
& +\frac{1}{2}g^{\alpha ^{\prime }\beta ^{\prime }}(y)g^{\kappa ^{\prime
}\mu ^{\prime }}(y)\frac{\partial }{\partial x^{\eta }}(\frac{\partial
x^{\eta }}{\partial y^{\mu ^{\prime }}})\overline{\delta }g_{\alpha ^{\prime
}\beta ^{\prime }}(y)\text{ \ \ (C)}  \tag{D11}
\end{align}%
The third term is%
\begin{align*}
& \frac{\partial }{\partial x^{\mu }}(\frac{\partial x^{\mu }}{\partial
y^{\mu ^{\prime }}})\frac{\partial R}{\partial \partial _{\kappa ^{\prime
}}\partial _{\mu ^{\prime }}g_{\alpha ^{\prime }\beta ^{\prime }}(y)}%
\overline{\delta }g_{\alpha ^{\prime }\beta ^{\prime }}(y) \\
& =\frac{\partial }{\partial x^{\mu }}(\frac{\partial x^{\mu }}{\partial
y^{\mu ^{\prime }}})(g^{\alpha ^{\prime }\kappa ^{\prime }}g^{\beta ^{\prime
}\mu ^{\prime }}-g^{\alpha ^{\prime }\beta \prime }g^{\kappa ^{\prime }\mu
^{\prime }})\overline{\delta }g_{\alpha ^{\prime }\beta ^{\prime }}(y)
\end{align*}%
\begin{align}
& =\frac{\partial }{\partial x^{\mu }}(\frac{\partial x^{\mu }}{\partial
y^{\mu ^{\prime }}})g^{\alpha ^{\prime }\kappa ^{\prime }}g^{\beta ^{\prime
}\mu ^{\prime }}\overline{\delta }g_{\alpha ^{\prime }\beta ^{\prime }}(y)%
\text{ \ \ (A)}  \notag \\
& -\frac{\partial }{\partial x^{\mu }}(\frac{\partial x^{\mu }}{\partial
y^{\mu ^{\prime }}})g^{\alpha ^{\prime }\beta \prime }g^{\kappa ^{\prime
}\mu ^{\prime }}\overline{\delta }g_{\alpha ^{\prime }\beta ^{\prime }}(y)%
\text{ \ \ (C)}  \tag{D12}
\end{align}%
The forth term is%
\begin{align*}
& -\frac{\partial R}{\partial \partial _{\kappa ^{\prime }}\partial _{\mu
^{\prime }}g_{\alpha "\beta ^{\prime }}(y)}\frac{\partial ^{2}x^{\alpha }}{%
\partial y^{\mu ^{\prime }}\partial y^{\alpha "}}\frac{\partial y^{\alpha
\prime }}{\partial x^{\alpha }}\overline{\delta }g_{\alpha \prime \beta
\prime }(y) \\
& =-\frac{\partial ^{2}x^{\alpha }}{\partial y^{\mu ^{\prime }}\partial
y^{\alpha "}}\frac{\partial y^{\alpha \prime }}{\partial x^{\alpha }}%
(g^{\alpha "\kappa ^{\prime }}g^{\beta ^{\prime }\mu ^{\prime }}-g^{\alpha
"\beta \prime }g^{\kappa ^{\prime }\mu ^{\prime }})\overline{\delta }%
g_{\alpha ^{\prime }\beta ^{\prime }}(y)
\end{align*}%
\begin{align}
& =-\frac{\partial ^{2}x^{\alpha }}{\partial y^{\mu ^{\prime }}\partial
y^{\alpha "}}\frac{\partial y^{\alpha \prime }}{\partial x^{\alpha }}%
g^{\alpha "\kappa ^{\prime }}g^{\beta ^{\prime }\mu ^{\prime }}\overline{%
\delta }g_{\alpha ^{\prime }\beta ^{\prime }}(y)\text{ \ \ (B)}  \notag \\
& +\frac{\partial ^{2}x^{\alpha }}{\partial y^{\mu ^{\prime }}\partial
y^{\alpha "}}\frac{\partial y^{\alpha \prime }}{\partial x^{\alpha }}%
g^{\alpha "\beta \prime }g^{\kappa ^{\prime }\mu ^{\prime }})\overline{%
\delta }g_{\alpha ^{\prime }\beta ^{\prime }}(y)\text{ \ \ (B)}  \tag{D13}
\end{align}%
The last term is%
\begin{align*}
& -\frac{\partial R}{\partial \partial _{\kappa ^{\prime }}\partial _{\mu
^{\prime }}g_{\alpha ^{\prime }\beta "}(y)}\frac{\partial ^{2}x^{\beta }}{%
\partial y^{\mu ^{\prime }}\partial y^{\beta "}}\frac{\partial y^{\beta
\prime }}{\partial x^{\beta }}\overline{\delta }g_{\alpha \prime \beta
\prime }(y) \\
& =-\frac{\partial ^{2}x^{\beta }}{\partial y^{\mu ^{\prime }}\partial
y^{\beta "}}\frac{\partial y^{\beta \prime }}{\partial x^{\beta }}(g^{\alpha
\prime \kappa ^{\prime }}g^{\beta "\mu ^{\prime }}-g^{\alpha \prime \beta
"}g^{\kappa ^{\prime }\mu ^{\prime }})\overline{\delta }g_{\alpha ^{\prime
}\beta ^{\prime }}(y)
\end{align*}%
\begin{align}
& =-\frac{\partial ^{2}x^{\beta }}{\partial y^{\mu ^{\prime }}\partial
y^{\beta "}}\frac{\partial y^{\beta \prime }}{\partial x^{\beta }}g^{\alpha
\prime \kappa ^{\prime }}g^{\beta "\mu ^{\prime }}\overline{\delta }%
g_{\alpha ^{\prime }\beta ^{\prime }}(y)\text{ \ \ (D)}  \notag \\
& +\frac{\partial ^{2}x^{\beta }}{\partial y^{\mu ^{\prime }}\partial
y^{\beta "}}\frac{\partial y^{\beta \prime }}{\partial x^{\beta }}g^{\alpha
\prime \beta "}g^{\kappa ^{\prime }\mu ^{\prime }}\overline{\delta }%
g_{\alpha ^{\prime }\beta ^{\prime }}(y)\text{ \ \ (B)}  \tag{D14}
\end{align}%
All the terms marked (A) cancel each other, all the terms marked (B) cancel
each other, etc. Therfore we get%
\begin{equation}
I_{x}^{\kappa }=\frac{\partial x^{\kappa }}{\partial y^{k^{\prime }}}%
I_{y}^{\kappa ^{\prime }}  \tag{D15}
\end{equation}%
That is (D2).
\end{proof}

\end{document}